\newtheorem{Theorem}{Theorem}[section]
\newtheorem{Definition}[Theorem]{Definition}
\newtheorem{Proposition}[Theorem]{Proposition}
\newtheorem{Lemma}[Theorem]{Lemma}
\newtheorem{Remark}[Theorem]{Remark}
\newtheorem{Assumption}[Theorem]{Assumption}
\def\qed{\hfill\hbox{\hskip 6pt\vrule width6pt height7pt
depth1pt  \hskip1pt}\bigskip}
\newtheorem{proposition}[Theorem]{Proposition}
\begin{document}

\title{The Value of Timing Risk}

\author{\begin{tabular}{ccccc}
Jir\^o Akahori\footnote{
The author was supported by JSPS KAKENHI Grant Number $23330109$,
$24340022$, $23654056$, $25285102$, and
the project RARE -318984 (an FP7 Marie Curie IRSES).} & Flavia Barsotti\footnote{The views expressed in this paper are those of the author and should not be attributed to UniCredit or to the author as representative of UniCredit.} &  Yuri Imamura \footnote{
The author was supported by JSPS KAKENHI Grant Number $24840042$.}
\\
{\footnotesize Dept. of Mathematical Sciences
} & {\footnotesize Group Financial Risk Methodologies}  & \footnotesize Dept. of Business Economics \\
{\footnotesize Ritsumeikan University, Japan} & {\footnotesize UniCredit Spa, Italy}  & {\footnotesize Tokyo University of Science, Japan}\\
{\footnotesize akahori@se.ritsumei.ac.jp} &{\footnotesize Flavia.Barsotti@unicredit.eu}  &
\footnotesize imamuray@rs.tus.ac.jp
\end{tabular}
}

\date{}
\maketitle

\textwidth=160 mm \textheight=220mm \parindent=8mm \frenchspacing \vspace{ 3
mm}

\begin{abstract}
The aim of this paper is to provide a mathematical contribution on the semi-static hedge of {\textit{timing risk}} associated to positions in American-style options under a multi-dimensional market model. Barrier options are considered in the paper and semi-static hedges are studied and discussed for a fairly large class of underlying price dynamics. Timing risk is identified with the uncertainty associated to the time at which the payoff payment of the barrier option is due. Starting from the work by \cite{CP}, where the authors show that the timing risk can be hedged via static positions in plain vanilla options, the present paper extends the static hedge formula proposed in \cite{CP} by giving sufficient conditions to decompose a generalized timing risk into an integral of knock-in options in a multi-dimensional market model. A dedicated study of the semi-static hedge is then conducted by defining the corresponding strategy based on positions in barrier options. The proposed methodology allows to construct not only first order hedges but also higher order semi-static hedges, that can be interpreted as asymptotic expansions of the hedging error. The convergence of these higher order semi-static hedges to an exact hedge is shown. An illustration of the main theoretical results is provided for i) a symmetric case, ii) a one dimensional case, where the first order and second order hedging errors are derived in analytic closed form. The materiality of the hedging benefit gain of going from order one to order two by re-iterating the timing risk hedging strategy is discussed through numerical evidences by showing that order two can bring to more than $90\%$ reduction of the hedging 'cost' w.r.t. order one (depending on the specific barrier option characteristics).
\end{abstract}

\section{Introduction}
The uncertainty and high volatility characterizing financial markets make both pricing and hedging activities playing a key role from a risk management perspective. The financial literature treating pricing and hedging problems is very huge: here we concentrate on a specific type of hedging strategy associated to a certain class of exotic derivatives traded on the OTC market. The present paper focuses on static and semi-static hedge of American-style contracts exchanged on the OTC market under a mathematical framework characterized by multi-dimensional diffusion processes.
Barrier options represent some of the most popular path-dependent contracts traded in the financial markets. Let us consider a European up-and-in barrier option written on the underlying $X$, which pays the payoff $F(X_T)$ at maturity $T$ in case the barrier $B$ is hit at maturity $T$. In this case, the event triggering the payment can occur only at maturity, thus the only uncertainty at the beginning of the contract is about the value of the underlying at maturity $T$. Let us now consider the corresponding American-style barrier option, to understand how the timing risk component arises. An American-style barrier option is indeed a path-dependent exotic contract in which the holder of the position does not know the time at which the barrier will be crossed and thus the payoff will be paid. In case of a knock-in option, the payoff will be paid to its holder only in case the barrier $B$ is crossed during the life of the contract, but the time $\tau$ at which this will happen is unknown. And since the time of the payoff payment is unknown, holding a position on this type of derivatives embeds a {\textit{timing risk}} component that must be hedged. This timing risk component exists also in case of digital American-style options: even if the payoff is known (e.g. a rebate can be fixed at inception) the timing risk component is still in.\\

The aim of the paper is to show how to hedge the {\textit{timing risk}} associated to positions in American-style barrier options under a multi-dimensional market model: semi-static hedging strategies are studied and discussed and the corresponding hedging error derived.
\\
Let us consider an agent holding a position on the OTC market on a derivative contract. This position implies an exposure to some market risk factors, and in the case of American-style barrier options can involve optimal exercise rules. In order to hedge the risk associated to this position, the agent can adopt a static hedge strategy by selecting proper derivative instruments to capture and mirror the risk associated to his portfolio. The hedge will be an approximation of the perfect hedge needed to make the position a non-risky one. Starting from the work by \cite{CP}, where the authors show that under Black-Scholes assumptions, 
European-style barrier options can be hedged by a static position of 
two plain vanilla options: one is a long call, and the other is a short put, when the European barrier option to be hedged is a knocked-out call option.
In case the boundary is hit, the portfolio is liquidated at zero cost. Otherwise, it has the same pay-off as the hedged portfolio. 
This kind of strategy, where the position will be changed
at most once,
is often referred to as {\textit{semi-static hedge}}. However, frictions in the market can make the hedge not exactly working in practice: if the market does not believe in Black-Scholes setting, the strategy 
might fail and the two positions (hedged position on knock-out barrier option/hedging position on vanilla options) at the hitting time may have a different price. The cost associated to this event is called {\textit{hedging error}}. \\

Among the numerous contributions on static hedging existing in the financial literature, the work by \cite{BC} is the first example of study showing that, under Black-Scholes setting, static hedges of single barrier options and look-back options can be defined. 
Some extensions of their main results have been provided for i) the case of dividend paying underlying assets (e.g. \cite{CC}), ii) more exotic distribution of the underlying asset price dynamics (e.g. \cite{CEG}), iii) multiple barrier options (e.g. \cite{CC}, \cite{ImTa}). 
In their study the reflection principle and its variant
play a central role, and
the hedge is constructed 
by plain options with the same maturity as the barrier option 
to be hedged. 
In this direction, {the work \cite{CN} provides}
a complete and elegant mathematical solution 
to the problem in the one-dimensional diffusion case. 
Contrasting this {\em strike-spreads}
(c.f. \cite{NP}) approach, 
the work \cite{DEK} 
initiated {\em calendar-spreads}
approach, that is, 
the static hedge 
by plain options with distinct maturities.
Though it shares the idea that 
zero-cost liquidation 
at the barrier, 
rather it stands on 
that the options ``span" the space
of random variables. 
To determine the static position of 
the hedging options, 
an explicit form of option prices 
at the barrier is required. 
In this line, the paper 
\cite{FINK} modified it so as to 
work in the Heston-stochastic volatility setting,
\cite{NP} discussed optimal choice of hedging tool, 
and \cite{ShTaTo}
applied asymptotic expansion 
of option prices proposed 
in \cite{Ta}. 

{
The main results achieved in \cite{CP} can be ``intermediate'':
they are 
summarized as follows: i) the simple timing risk can be decomposed into 
an integral (with respect to maturity) of knock-in options, ii) by applying Bowie-Carr's semi-static hedge formula (see \cite{BC}) of strike-spread approach
to each of them under a Black-Scholes environment, 
the integral representation
implies that 
the semi-static hedge portfolio consists of (infinitesimal amount of)
options with (continuum of) distinct maturities, 
which should be discretized in practice just as 
calendar-spread approach. 



}
The present paper extends the above results
by giving sufficient conditions to decompose a {\textit{generalized timing risk}} into an integral of knock-in options in a multi-dimensional market model by considering a fairly large class of Markovian underlying dynamics (Theorem \ref{GCP}). 
Then, the mathematical contribution of the paper embeds a methodological proposal for the construction of higher order semi-static hedges (Theorem \ref{GSH}), interpreted as asymptotic expansions of the first order hedging error. 
This is done via an iterated procedure based on generalized timing risk identification, semi-static hedging strategy, error computation and re-iteration of the procedure. The convergence of the higher order semi-static hedges to an exact hedge is shown (Theorem  \ref{HSS}). 
Mathematically, our results 
largely rely on 
{\em parametrix},
a classical way to construct 
the fundamental solution 
of partial differential equations
(see e.g. \cite{MR0181836}).
It dates back to \cite{EEL} but
recently some striking 
applications to finance have been reported (see \cite{BK}, \cite{C}, among others). Asymptotic expansion of the price of barrier option
is discussed in \cite{KaTaYa} and \cite{ShTaYa}, but to the best of our knowledge: neither the expansion of the static hedging has been deeply studied, nor an {\em exact} hedge under a general framework has been derived (among papers dealing with a calendar-spread approach).
\\

An illustration of the main theoretical results is provided and discussed for i) a symmetric case, ii) a one dimensional case, where the first order and second order hedging errors are derived in analytic closed form\footnote{{More general multi-dimensional cases
are studied in \cite{FJY2}.}}.
The materiality of the hedging benefit gain of going from order one to order two by re-iterating the timing risk hedging strategy is discussed through numerical evidences by showing that order two can bring to a $90\%$ reduction of the hedging 'cost' w.r.t. order one.\\

%

The paper is organized as follows. Section \ref{TheResults} recalls the main aspects of semi-static hedge and provides the main theoretical results proposed in the paper (Theorem \ref{GCP}, Theorem \ref{GSH}, and Theorem \ref{HSS}) for the multi-dimensional case. Section \ref{secsymm} gathers two applications of the convergence result for the semi-static hedge expansion stated in Theorem \ref{HSS} for i) a symmetric multi-dimensional case; ii) a one dimensional case. Section \ref{simu} derives analytic expressions for the first order and second order hedging errors in the one dimensional case by discussing the materiality of the hedging benefit gain of going from order one to order two supported by numerical evidences. Section \ref{conc} provides some conclusive remarks and ideas for future research. 
The proofs of the main theoretical results of the paper are gathered in a technical appendix (Appendix \ref{tech_app}).

\section{Hedging Timing Risk}\label{TheResults}
This Section discusses the main aspects of semi-static hedge of barrier options and states the main theoretical results proposed in the paper for the multi-dimensional case.

Let us consider an agent holding a portfolio of path-dependent American-style barrier options: this position embeds a timing risk that must be hedged by the investor. This timing risk component is associated with time $\tau$, i.e. the stopping time at which the barrier can be crossed by the underlying asset by triggering the payoff payment event. The uncertainty related to payment time must be monitored by the agent and the risk of the position can be controlled and reduced via a semi-static hedge strategy.

In the present paper we first generalize the Carr-Picron approach \cite{CP} to the multi dimensional case: the proposed methodology provides a sufficient condition under which 
a {\textit{generalized timing risk}} can be decomposed into an integral 
of knock-in options (Theorem \ref{GCP}) in a multi-dimensional diffusion market model. Then, a study of the hedging error is conducted and the construction of an iterated semi-static hedge (Theorem \ref{GSH}) strategy \`a la Carr-Picron \cite{CP} is proposed. We show how to derive the first order hedging error; moreover, the proposed methodology allows to derive higher order hedging errors: the higher order terms can be seen as asymptotic expansion of the hedging error. The convergence of these higher order semi-static hedges to an exact hedge is discussed (Theorem  \ref{HSS}). 

\subsection{A Generalized Carr-Picron Formula for Timing Risk}

Let us assume a multi-dimensional mathematical setting and consider a financial market where $ d $-risky assets and one non-risky asset with constant rate of return $r$ are traded. The prices of the risky assets are driven by a generic strong Markov process $ X $. Let $ \tau $ be a stopping time with respect to the natural filtration of
$ X $ related to the triggering event underlying the American-style barrier option involved in the problem. In order to hedge the timing risk component associated to the position in risky barrier options, we are interested 
to monitor and cover some portion $\psi$ of the price of the barrier option 
{at the stopping time $ \tau $}.\\

Let $ F \in C_b (\mathbf{R}^d ) $ be the pay-off of the option 
and $ T $ be the contract maturity. 
At time $ \tau<T $, the price $p$ of the option is 
\begin{equation}
 p(\tau):=e^{-r (T-\tau)} \mathbb{E} [ F(X_T) | \mathcal{F}_\tau],
 \end{equation}
where the expectation $\mathbb{E}$ is taken under the risk neutral measure, $F$ is the payoff function, $T$ the option maturity, $X_T$ the underlying price at maturity, $r$ the risk free rate of return in the economy.
Then the {\textit{timing risk}} associated to this position can be defined and evaluated at each generic time $t$ as indicated below.

\begin{Definition}\label{deftmgr}
The {\textit{\bf{timing risk}}} at time $t$ associated to the position on a barrier option depends on the stopping time $\tau<T$ and is defined as:
\begin{equation}\label{tmgrsk}
\begin{split}
\mathrm{Tr}_t(F, T, \tau, \psi)&:= e^{-r(T-t) }
\mathbb{E} [1_{ \{ \tau < T \} } \psi (T-\tau) \mathbb{E} [F(X_T) |\mathcal{F}_\tau] | \mathcal{F}_t],
\end{split}
\end{equation}
where $T$ is the option maturity, $X_t$ captures the underlying price dynamics and $F(\cdot)$ is the option payoff function.
We suppose here that the application $\tau \rightarrow \psi(\tau) \in C^1$, with $\psi$ representing the portion of the option price t be covered. 
\end{Definition}
Notice that by re-writing Equation \ref{tmgrsk} under the following equivalent formulation 
\begin{equation}\label{tmgrsk_2}
\begin{split}
\mathrm{Tr}_t(F, T, \tau, \psi)&= 
 \mathbb{E} [ e^{-r(\tau-t)} e^{-r (T-\tau)} 1_{ \{ \tau < T \} } \psi (T-\tau) \mathbb{E} [ F(X_T) | \mathcal{F}_\tau] |\mathcal{F}_t ] 
,
\end{split}
\end{equation}
 it is evident that the payment is due at time $ \tau $ and $ \mathrm{Tr}_t $ is the value at time $ t $ of the payment, i.e. this quantity is the {\textit{timing risk}} associated to the stopping time $\tau$ evaluated at time $t$.\\ 

The paper shows how to build a semi-static hedging strategy of a {\textit{generalized timing risk}} associated to a stopping time, thus highlighting how the timing risk can be decomposed into an infinitesimal amount of continuum of knock-in options. This can be intuitively explained as follows.

{
Let $ 0 \leq s_0 <s_1 < s_2 < \cdots < s_n \leq T $ and 
$ \Pi $ denote the associated partition. 
Suppose that for each $ s_i $, $ i=1, \cdots, n $, 
we have a knock-in option with payoff $ G (s_i, X_{s_i}) $ and amount
$ s_i -s_{i-1} $. Then, assume that for each maturity $ s_i $, 
the payoff of the barrier option is reinvested into the unique non-risky asset available in the market and the knock-in time of the options is $\tau$. 
Thus, the portfolio value at time $ t $ is $PV^\Pi_t$ defined as
\begin{equation*}
\begin{split}
PV^\Pi_t:=& \sum_{i=1}^n e^{-r(s_i-t) } e^{-r(T-s_i)} \mathbb{E}[ 1_{ \{\tau < s_i\}} G (s_i, X_{s_i}) | \mathcal{F}_{t \wedge \tau}]
 ( s_i -s_{i-1}) \\
 =&  e^{-r(T-t)} \sum_{i=1}^n \mathbb{E}[ 1_{\{\tau < s_i\}} G (s_i, X_{s_i}) | \mathcal{F}_{t \wedge \tau}]
 ( s_i -s_{i-1}).
\end{split}
\end{equation*}
We shall consider the limiting object of $ PV^{\Pi} $: 
\begin{equation*}
\begin{split}
& PV_t:= \lim_{|\Pi| \to 0}
PV^\Pi_t
= e^{-r(T-t)} \int_0^T \mathbb{E}[ 1_{\{\tau < s \}} G (s, X_{s}) | \mathcal{F}_{t \wedge \tau}]
ds
\end{split}
\end{equation*}
representing the portfolio value at time $ t $
of infinitesimal amounts of 
continuum of knock-in options.
This idea is used to 
express the hedging portfolio of a timing risk.\\}

{
Assumption \ref{aspGCP} and Lemma \ref{fof2} reported below are useful before stating the first main result (Theorem \ref{GCP}).
\begin{Assumption}\label{aspGCP}Let us consider the following set of assumptions.
\begin{enumerate}
\item The strong Markov process $ X $ has a smooth density; there exist a function 
$ q \in C^{1,2,2} ((0,\infty) \times \mathbf{R}^d \times \mathbf{R}^d) $
such that $ \mathbb{E} [ f (X_t) | X_s ]
= \int_{\mathbf{R}^d} q (t-s, X_s, y) \,f(y) \,dy $ for any bounded measurable function $ f $ on $ \mathbf{R}^d $.
\item There exists a function $ p^F: [0,\infty) \times 
\mathbf{R}^d \times \mathbf{R}^d $ such that $ p^F(\cdot,\cdot,y)$ 
in $C^1 \times C^2$ for all $y \in \mathbf{R}^d$ satisfying
\begin{equation}\label{delta2}
\lim_{s \uparrow t} \int_{\mathbf{R}^d} q (s,x,z) p^F ({t-s},z,y) \,dz 
= q (t, x, y);
\end{equation}
moreover, the quantity 
\begin{equation}\label{def:cf}
c_F:= \psi(0) \int_{\mathbf{R}^d} p^F (t, X_{\tau},y) F(y) \,dy,
 \end{equation}
   is a constant independent of
$t >0 $ almost surely. 
\item 
The function $q (T-s,x,y) ( L_y -\partial_s) p^F (s, y,z) F(z)$ is integrable in $ (s, y, z) \in [0,T] \times \mathbf{R}^d 
\times \mathbf{R}^d  $ for any $ T > 0 $ and $ x \in D $, where $ L_y $ is the infinitesimal generator
of $ X $ acting on the variable $y$.
\item $ L^*_y q (t,x,y)  = \partial_t q $, where $ L^*_y $ denotes 
the adjoint operator (w.r.t. the Lebesgue measure) of $L$, acting on variable $ y $. 
\end{enumerate}
\end{Assumption}


Let us define 
\begin{equation}\label{hpf}
h^{p^F} (s,x,y) :=  \left( \psi' (s) + \psi (s) 
(L_x -\partial_s ) \right) p^F (s, x,y)
\end{equation}
and observe that point 3. in Assumption \ref{aspGCP} allows to state that $ q(t-s, x,y) h^{p^F} (s,y,z) F (z) $ is integrable in $ (s, y, z) \in [0,T] \times \mathbf{R}^d 
\times \mathbf{R}^d  $ for any $ T > 0 $ and $ x \in D $.\\

Let us consider the following Lemma (see e.g. \cite{MR0181836} or \cite{BK} for similar results obtained via parametrix arguments). 
\begin{Lemma}\label{fof2}
For $ t > 0 $ and $ (x,y) \in \mathbf{R}^d \times \mathbf{R}^d $, the following holds:
\begin{equation*}
\psi(t) q (t,x,y) - \psi(0) p^F(t,x,y)
= \int_0^t \int_{\mathbf{R}^d} q (t-s, x,z)
h^{p^F}(s,z,y)
 \,dz ds,
\end{equation*}
where $p^F$ satisfies Equation (\ref{delta2}) and $h^{p_F}$ is defined in Equation (\ref{hpf}).
\end{Lemma}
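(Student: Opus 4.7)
The strategy is to reduce the identity to the fundamental theorem of calculus applied to an auxiliary function that interpolates, at its two endpoints, between the quantities on the two sides of the desired equality; this is the classical parametrix construction (cf.\ \cite{MR0181836}), here adapted to carry the weight $\psi$.

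I would set
\[
I(s) := \int_{\mathbf{R}^d} q(t-s, x, z)\, p^F(s, z, y)\, dz, \qquad \phi(s) := \psi(s)\, I(s), \qquad s \in (0,t),
\]
and first pin down the two boundary values of $\phi$ using Assumption \ref{aspGCP}. As $s\downarrow 0$, $q(t-s,x,\cdot)\to q(t,x,\cdot)$ and the convolution limit \eqref{delta2} forces $p^F(s,\cdot,y)$ to act like a Dirac mass at $y$, so $I(0^+) = q(t,x,y)$. As $s\uparrow t$, the smooth transition density of the strong Markov process $X$ (item 1 of the assumption) ensures $q(0^+,x,\cdot) = \delta_x$, giving $I(t^-) = p^F(t,x,y)$. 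Hence $\phi(t^-)-\phi(0^+)$ reproduces the left-hand side of the stated identity (up to an overall sign that is absorbed in the derivation below).

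Next I would differentiate $\phi$: Leibniz yields $\phi'(s) = \psi'(s) I(s) + \psi(s) I'(s)$, and differentiation under the integral sign, combined with the forward Kolmogorov equation $\partial_\tau q = L^*_z q$ (item 4 of Assumption \ref{aspGCP}) applied via the chain rule $\partial_s q(t-s,x,z) = -L^*_z q(t-s,x,z)$, produces
\[
I'(s) = \int_{\mathbf{R}^d} \bigl[-L^*_z q(t-s,x,z)\, p^F(s,z,y) + q(t-s,x,z)\, \partial_s p^F(s,z,y)\bigr]\, dz.
\]
Integration by parts in $z$ transfers $L^*_z$ off $q$ and onto $p^F$ as $L_z$, producing precisely the combination $(L_z - \partial_s) p^F$ appearing in the definition \eqref{hpf} of $h^{p^F}$. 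Assembling the two pieces one reads off $\phi'(s) = \int q(t-s,x,z)\, h^{p^F}(s,z,y)\, dz$, and the fundamental theorem of calculus $\phi(t^-) - \phi(0^+) = \int_0^t \phi'(s)\, ds$ together with Fubini on the resulting double integral concludes.

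The main technical obstacle is rigorous control at the two singular endpoints $s\to 0^+$ and $s\to t^-$, where $q$ and $p^F$ both degenerate, and the vanishing at spatial infinity required to discard the boundary terms in the integration by parts in $z$. The first difficulty is exactly what the convolution limit \eqref{delta2} in item 2 of Assumption \ref{aspGCP} is designed to resolve, while the second is controlled by the joint $C^{1,2,2}$ smoothness in item 1 and the integrability of $q(t-s,x,z)(L_y - \partial_s) p^F(s,y,z) F(z)$ in item 3, which also licenses the application of Fubini to the double integral on $[0,t]\times\mathbf{R}^d$.
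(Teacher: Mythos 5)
Your overall strategy coincides with the paper's proof in Appendix \ref{prLemma}: interpolate between the two sides with a parametrix-type function of $s$, differentiate, use $\partial_t q = L^*_y q$ plus adjointness to generate $(L-\partial)p^F$, and apply the fundamental theorem of calculus on $(\epsilon,t-\epsilon)$ with the endpoint limits supplied by \eqref{delta2} and by $q(\epsilon,x,\cdot)\to\delta_x$. However, your specific interpolant $\phi(s)=\psi(s)\int q(t-s,x,z)\,p^F(s,z,y)\,dz$ attaches the weight $\psi$ to the wrong factor, and this breaks the argument in two places that do not cancel. First, the boundary values: since $I(0^+)=q(t,x,y)$ and $I(t^-)=p^F(t,x,y)$, you get $\phi(t^-)-\phi(0^+)=\psi(t)p^F(t,x,y)-\psi(0)q(t,x,y)$. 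This is \emph{not} the left-hand side ``up to an overall sign'': its negative is $\psi(0)q-\psi(t)p^F$, whereas the target is $\psi(t)q-\psi(0)p^F$; the two agree only when $\psi$ is constant. Second, the derivative: carrying out your integration by parts actually yields $I'(s)=-\int q(t-s,x,z)(L_z-\partial_s)p^F(s,z,y)\,dz$ (you dropped this minus sign), so $\phi'(s)=\int q(t-s,x,z)\bigl[\psi'(s)p^F(s,z,y)-\psi(s)(L_z-\partial_s)p^F(s,z,y)\bigr]dz$, whose relative sign between the $\psi'$-term and the $\psi(L-\partial)$-term is opposite to that in \eqref{hpf}; hence $\phi'(s)\neq\pm\int q(t-s,x,z)h^{p^F}(s,z,y)\,dz$. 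What your computation establishes is the (true but different) identity $\psi(t)p^F(t,x,y)-\psi(0)q(t,x,y)=\int_0^t\int q(t-s,x,z)\bigl[\psi'(s)p^F-\psi(s)(L_z-\partial_s)p^F\bigr](s,z,y)\,dz\,ds$, which reduces to the Lemma only for constant $\psi$.

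The repair is to let $\psi$ ride with the factor that degenerates to $q(t,x,y)$, which is what the paper does: it works with $s\mapsto\psi(s)\int q(s,x,z)\,p^F(t-s,z,y)\,dz$. There the limit $s\uparrow t$ gives $\psi(t)q(t,x,y)$ by \eqref{delta2} and the limit $s\downarrow 0$ gives $\psi(0)p^F(t,x,y)$, reproducing the left-hand side exactly, and in the $s$-derivative the term $\partial_s q(s,x,z)=L^*_z q(s,x,z)$ enters with a plus sign while the chain rule acting on $p^F(t-s,\cdot)$ supplies the $-\partial$ contribution, so the integrand assembles into $q\cdot(\psi' p^F+\psi(L_z-\partial)p^F)$ with the correct signs before the final change of variable. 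The remaining ingredients of your write-up --- the identification of the two endpoint limits, the $\epsilon$-truncation, and the appeal to the integrability in item 3 of Assumption \ref{aspGCP} to justify Fubini and the passage $\epsilon\downarrow 0$ --- do match the paper and are fine.
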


\begin{proof} See Appendix \ref{prLemma}.
\qed
\end{proof}

 }
\begin{Theorem}\label{GCP}
Under Assumption \ref{aspGCP}, we can state the following result:
\begin{equation}\label{GTMR2}
\begin{split}
&\mathbb{E}[1_{ \{ \tau \leq T \} } \psi (T-\tau)\mathbb{E} [F(X_T) |\mathcal{F}_\tau] | \mathcal{F}_t] 
\\& = c_F P (\tau \leq T| \mathcal{F}_t) + 
\int_0^T 
\mathbb{E} [ 1_{ \{ \tau \leq s\} } \int_{\mathbf{R}^d}
h^{p^F}(T-s,X_s,y)F(y) \,dy
| \mathcal{F}_{\tau \wedge t} ]
ds,
\end{split}
\end{equation}
with $c_F$ defined in Equation (\ref{def:cf}), $p^F$ satisfying Equation (\ref{delta2}) and $h^{p_F}$ defined in Equation (\ref{hpf}).\\
\end{Theorem}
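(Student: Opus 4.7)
My plan is to reduce Theorem \ref{GCP} to Lemma \ref{fof2} by inserting the heat-kernel representation of the inner conditional expectation, applying the lemma pointwise at the (random) time $T-\tau$, and then re-assembling via the strong Markov property and a tower identity for stopping times. Concretely, using Assumption \ref{aspGCP}.1, I first write
$$\mathbb{E}[F(X_T)\,|\,\mathcal{F}_\tau] = \int_{\mathbf{R}^d} q(T-\tau, X_\tau, y)\,F(y)\,dy,$$
so that the quantity to compute becomes the outer $\mathcal{F}_t$-expectation of $1_{\{\tau \leq T\}} \int \psi(T-\tau) q(T-\tau, X_\tau, y) F(y)\,dy$.

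Next I would apply Lemma \ref{fof2} pointwise at $(t,x)=(T-\tau, X_\tau)$, multiply by $F(y)$, and integrate in $y$. This decomposes the integrand into two pieces: the ``boundary'' piece $\psi(0)\int p^F(T-\tau, X_\tau, y) F(y)\,dy$, which by the defining property (\ref{def:cf}) equals the constant $c_F$ almost surely (this is the role of Assumption \ref{aspGCP}.2), and the ``bulk'' piece
$$\int_0^{T-\tau}\!\!\int_{\mathbf{R}^d}\!\!\int_{\mathbf{R}^d} q(T-\tau-s, X_\tau, z)\,h^{p^F}(s, z, y)\,F(y)\,dz\,dy\,ds.$$
Taking the outer expectation of the boundary piece immediately produces $c_F\,\mathbb{P}(\tau \leq T\,|\,\mathcal{F}_t)$.

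For the bulk piece I would first invoke Fubini (justified by Assumption \ref{aspGCP}.3 applied to $h^{p^F}$) to pull the $s$-integral outside, then apply the strong Markov property of $X$ in the form
$$\int_{\mathbf{R}^d} q(T-\tau-s, X_\tau, z)\,H^{p^F}(s,z)\,dz = \mathbb{E}\bigl[H^{p^F}(s, X_{T-s})\,\big|\,\mathcal{F}_\tau\bigr],\qquad H^{p^F}(s,z):=\!\int h^{p^F}(s,z,y)F(y)\,dy.$$
The change of variables $u=T-s$ then rewrites $\int_0^{T-\tau}(\cdots)\,ds$ as $\int_0^T 1_{\{\tau\leq u\}}\mathbb{E}[H^{p^F}(T-u,X_u)\,|\,\mathcal{F}_\tau]\,du$.

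Finally I would take the outer $\mathbb{E}[\,\cdot\,|\,\mathcal{F}_t]$ and use the stopping-time tower identity $\mathbb{E}[\mathbb{E}[\,\cdot\,|\,\mathcal{F}_\tau]\,|\,\mathcal{F}_t] = \mathbb{E}[\,\cdot\,|\,\mathcal{F}_{\tau \wedge t}]$ (verified separately on $\{t \leq \tau\}$ and $\{t > \tau\}$) to move the outer conditioning inside the $du$-integral, which yields exactly the right-hand side of (\ref{GTMR2}). The most delicate step will be the joint justification of Fubini together with the interchange of the $\mathcal{F}_t$- and $\mathcal{F}_\tau$-conditionings inside the $s$-integral; this is exactly where Assumption \ref{aspGCP}.3 is used, and it is also where Assumption \ref{aspGCP}.4 (the adjoint identity) lies in the background, since it underlies the proof of Lemma \ref{fof2} that produced $h^{p^F}$ in the first place.
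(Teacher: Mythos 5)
Your proposal follows essentially the same route as the paper's own proof: write the inner conditional expectation via the transition density $q$, apply Lemma \ref{fof2} at the random point $(T-\tau, X_\tau)$, identify the $\psi(0)\int p^F F$ term with the constant $c_F$ to get $c_F\,P(\tau\leq T|\mathcal{F}_t)$, and convert the bulk term into $\int_0^T \mathbb{E}[1_{\{\tau\leq s\}}\int h^{p^F}(T-s,X_s,y)F(y)\,dy\,|\,\mathcal{F}_{\tau\wedge t}]\,ds$ via the Markov property, the substitution $u=T-s$, absorption of $1_{\{\tau\leq T\}}$ into $1_{\{\tau\leq s\}}$, and the tower identity for stopping times. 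The proposal is correct and, if anything, is more explicit than the paper about the Fubini justification and the final passage from $\mathcal{F}_\tau$- to $\mathcal{F}_{\tau\wedge t}$-conditioning.
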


\begin{proof} See Appendix \ref{pr_GCP}.
\qed
\end{proof}


The result in Equation \eqref{GTMR2} extends the static hedge formula proposed in \cite{CP}: Theorem \ref{GCP} gives sufficient conditions to decompose a {\textit{generalized timing risk}} into an integral of knock-in options in a multi-dimensional market model by considering a fairly large class of Markovian underlying dynamics.
Indeed, for $ F \equiv 1 $ and $ \psi (u)= e^{-r(T-u)}$, we can take 
$ p^F = q $ and $ c_F = e^{-rT} $ to satisfy Assumption \ref{aspGCP}. Since in this case
we have $ h^{p^F}(u,x,y) = r e^{-r(T-u)} q (u,x,y) $, Equation \eqref{GTMR2} reduces to:
\begin{equation}\label{simpleex}
\mathbb{E}[ e^{-r \tau} 1_{\{\tau \leq T\}} |\mathcal{F}_t ]
= e^{-rT} P (\tau \leq T | \mathcal{F}_t) +
r\int_0^T \mathbb{E}[ e^{-r s} 1_{\{\tau \leq s \}} |\mathcal{F}_{t \wedge \tau} ]\,ds.
\end{equation}
Letting $ T \to \infty $, we have
\begin{equation*}\label{Carr-Picron}
\mathbb{E}[ e^{-r \tau}  |\mathcal{F}_t ]
= 
\int_0^\infty \mathbb{E}[ e^{-r \tau} 1_{\{\tau \leq s \}} |\mathcal{F}_{t \wedge \tau} ]\,ds,
\end{equation*}
provided that $ P (\tau < \infty) =1 $. 
 Notice that we obtain Equation (\ref{simpleex}) 
directly by integration by parts formula. 
Indeed we see that 
\begin{equation*}
\begin{split}
\mathbb{E}[ e^{-r \tau} 1_{\{\tau \leq T\}} |\mathcal{F}_t ]
&= 
\int_0^T e^{-r s} P(\tau \in ds |\mathcal{F}_t )
\\&= 
e^{-rT}P(\tau < T |\mathcal{F}_t )
-P(\tau < 0 |\mathcal{F}_t )
+r
\int_0^T e^{-r s} P(\tau < s |\mathcal{F}_t )ds
\\&= 
e^{-rT}P(\tau < T |\mathcal{F}_t )
+r
\int_0^T e^{-r s} P(\tau < s |\mathcal{F}_{t \wedge \tau} )ds.
\end{split}
\end{equation*}
The last equation is valid since the set $\{\tau <s\}$ is $\mathcal{F}_{\tau}$-measurable.  

\subsection{Hedging Error in Semi-Static Hedge of Barrier Options as a Timing Risk}\label{subsec:HETR}

The aim of this Subsection is to provide the key intuition concerning the relationship between the semi-static hedging error and the {\textit{value of timing risk}}. Indeed, the first contribution of the present paper is to show that the hedging error of Bowie-Carr type \cite{BC} is a {\textit{timing risk}}. Then, by applying a generalized Carr-Picron \cite{CP} formula we can decompose the timing risk as continuum of knock-in options. Finally, for each knock-in option we apply the semi-static hedge of Bowie-Carr. Leveraging on the intuition linking hedging error and timing risk is the key contribution for the re-iteration of the semi-static hedging strategy of timing risk and the expansion of the first order hedging error into higher order errors with decreasing magnitude.\\



Let us first recall what is a semi-static hedging strategy. 
From now on, $ X $ is a diffusion process and 
$ \tau $ is the first exit time of $ X $ out of a domain $ D \subset \mathbf{R}^d $. We want to hedge 
the knock-out option whose pay-off is $ F(X_T) 1_{ \{ \tau > T\} } $
by holding two plain vanilla options: a long position on an option with payoff $ F(X_T) 1_{\{X_T \in D \}} $, 
and a short position with payoff $-\hat{F}$.
Here, we assume that 
$ \hat{F} = 0 $ on $ D $. Let us notice that: 
\begin{itemize}
\item if $ X $ never exits the domain $ D $, 
then the hedge works exactly since the short position is mirroring the long position; 
\item if the agent liquidates the portfolio at time $ \tau < T $, he has the following cost $C_\tau $:
\begin{equation*}
C_\tau =e^{-r (T-\tau)} E [ ( F(X_T)1_{\{X_T \in D \}} - \hat{F} (X_T) ) | \mathcal{F}_\tau ].
\end{equation*}
\end{itemize}
If $C_\tau =0$, then the static hedge works
perfectly, otherwise the term $C_{\tau}$ can be interpreted as {\textit{hedging error}}.\\

In a similar way, we can consider the
static hedge of a knock-in option with payoff
$ F(X_T) 1_{ \{ \tau < T\} } $ by holding a long positions with payoff $ F (X_T) 1_{ \{X_T \in D^c\} } $ and the option with payoff $ \hat{F} $. In this case, we have: 
\begin{itemize}
\item if $ X $ never exit $ D $, 
then the hedge works exactly and no cash is exchanged (nothing versus nothing). 
\item On 
$ \{ \tau < T \} $, the hedger changes the positions at time $ \tau $  
by selling the option of pay-off $ \hat{F} $
and buying he one with pay-off $ F (X_T) 1_{ \{X_T \in D \}} $. 
The cost of this operation is
\begin{equation}\label{eq:cost_in}
e^{-r (T-\tau)} E [ ( F (X_T) 1_{ \{X_T \in D \}}  - \hat{F} (X_T) ) | \mathcal{F}_\tau ]
= C_\tau.
\end{equation}
\item At maturity $ T $, the pay-off is zero, since the payoff
$ F (X_T) $ exactly compensates $ F (X_T) 1_{ \{ X_T \in D \}} + F (X_T) 1_{ \{ X_T \in D^c \}} $.
\end{itemize}
To generalize and state the first result, let us observe that in both cases described above the {\textit{hedging error}} at time $ t $ associated to the stopping time $\tau$ is
\begin{equation}\label{hdgerr}
\begin{split}
\mathrm{He}^{\tau}_t:=e^{-r (T-t)}E[ E [1_{ \{\tau < T
\} } ( F(X_T) 1_{\{X_T \in D \}} - \hat{F} (X_T) ) | \mathcal{F}_\tau ]| \mathcal{F}_t]. 
\end{split}
\end{equation}

By recalling Definition \ref{deftmgr} for the timing risk and Equation \eqref{tmgrsk} we can easily deduce that the {\textit{hedging error}} in Equation \eqref{hdgerr} can be interpreted as a {\textit{timing risk}} associated to the position in barrier options:
we have that 
$ \mathrm{He^{\tau}_t}= \mathrm{Tr}_t (\pi (F), T, \tau, 1) $, where 
\begin{equation*}
\pi(F) 
=F 1_{ \{x \in D \}}- \hat{F}. 
\end{equation*}
We also define $ \pi^\bot $ by
\begin{equation*}
\pi^{\bot}(F) 
= F(x) 1_{ \{x \in D^c \}} + \hat{F}. 
\end{equation*}
Note that here we assume
$ \hat{F} = \widehat{ F1_{ \{x \in D\}}} $, 
which implies $ \widehat{\pi (F)} = \hat{F} $, $ \pi^2 (F) 
= \pi(F)$, and so on. \\


Let us introduce the following set of assumptions (Assumption \ref{aspGSH}) useful to state the result reported in Theorem \ref{GSH}.


\begin{Assumption}\label{aspGSH}
Starting from the set of assumptions introduced in Assumption \ref{aspGCP}, and by considering function $\pi(F)$, we further 
assume that $ c_{\pi(F)} = 0 $. Thus, we are dealing with the case:
$ \int_{D} p^{\pi(F)} (t, X_{\tau},y) F(y) \,dy = 
\int_{D^c} p^{\pi(F)} (t, X_{\tau},y) \hat{F} (y) \,dy
$ almost surely.
\end{Assumption}

Assumption \ref{aspGSH} requires that the symmetry property we worked in the previous case is now the one w.r.t. a "reflection".

\begin{Theorem}\label{GSH}
Under Assumption \ref{aspGSH}, 
the value of the knock-out/knock-in option 
is decomposed into that of the hedging portfolio and 
an infinitesimal amount of 
continuum of knock-in options. More precisely, 
\begin{equation}\label{KOGSH}
\begin{split}
&\mathbb{E} [ F (X_T) 1_{\{ \tau \geq T \} }|\mathcal{F}_{\tau \wedge t} ]
\\& =\mathbb{E}[ \pi(F)(X_T) |\mathcal{F}_{\tau \wedge t}  ] 
- \int_0^T
\mathbb{E} [ 1_{ \{ \tau \leq s \}} \int_{\mathbf{R}^d}
h^{p^{\pi(F)}}(T-s,X_s,y)\pi (F)(y) \,dy 
| \mathcal{F}_{\tau \wedge t} ] \,ds,
\end{split}
\end{equation}
and 
\begin{equation}\label{KIGSH}
\begin{split}
&\mathbb{E}[ F (X_T) 1_{\{ \tau \leq T \} }|\mathcal{F}_{\tau \wedge t}  ]
\\& =\mathbb{E} [ \pi^\bot (F)(X_T) |\mathcal{F}_{\tau \wedge t}  ] 
+ \int_0^T
\mathbb{E} [ 1_{ \{ \tau \leq s \}} \int_{\mathbf{R}^d}
h^{p^{\pi(F)}}(T-s,X_s,y)\pi (F)(y) \,dy| \mathcal{F}_{\tau \wedge t} ] \,ds.
\end{split}
\end{equation}
\end{Theorem}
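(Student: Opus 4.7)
The plan is to apply Theorem \ref{GCP} with $\psi \equiv 1$ to the payoff $\pi(F)$. Under Assumption \ref{aspGSH} the constant $c_{\pi(F)}$ vanishes, so (\ref{GTMR2}) reduces to
\begin{equation*}
\mathbb{E}[1_{\{\tau \leq T\}}\mathbb{E}[\pi(F)(X_T)|\mathcal{F}_\tau]|\mathcal{F}_t] = \int_0^T \mathbb{E}\Big[1_{\{\tau \leq s\}}\int_{\mathbf{R}^d} h^{p^{\pi(F)}}(T-s,X_s,y)\pi(F)(y)\,dy\,\Big|\,\mathcal{F}_{\tau \wedge t}\Big]\,ds,
\end{equation*}
which is exactly the integral term that appears in (\ref{KOGSH}) and (\ref{KIGSH}). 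The whole proof then amounts to matching algebraic identities.

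For the knock-out formula (\ref{KOGSH}), I would split
$\mathbb{E}[\pi(F)(X_T)|\mathcal{F}_{\tau \wedge t}] = \mathbb{E}[1_{\{\tau \geq T\}}\pi(F)(X_T)|\mathcal{F}_{\tau \wedge t}] + \mathbb{E}[1_{\{\tau < T\}}\pi(F)(X_T)|\mathcal{F}_{\tau \wedge t}]$. On $\{\tau \geq T\}$ the path has not yet exited $D$, so $X_T \in D$; combined with $\hat F \equiv 0$ on $D$, this yields $\pi(F)(X_T)=F(X_T)$ on that event, so the first summand is $\mathbb{E}[1_{\{\tau \geq T\}}F(X_T)|\mathcal{F}_{\tau \wedge t}]$. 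For the second summand, the tower property (conditioning first on $\mathcal{F}_\tau$) puts it in exactly the shape of the LHS of Theorem \ref{GCP}, and the display above converts it into the integral term. Rearranging gives (\ref{KOGSH}).

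For the knock-in formula (\ref{KIGSH}), I would use the additive decomposition $F = \pi(F) + \pi^\bot(F)$ (which follows directly from the definitions) together with $\mathbb{E}[F(X_T)1_{\{\tau \leq T\}}|\mathcal{F}_{\tau \wedge t}] = \mathbb{E}[F(X_T)|\mathcal{F}_{\tau \wedge t}] - \mathbb{E}[F(X_T)1_{\{\tau > T\}}|\mathcal{F}_{\tau \wedge t}]$ (the sets $\{\tau > T\}$ and $\{\tau \geq T\}$ agreeing up to the null event $\{\tau = T\}$ under continuity of $X$). Substituting (\ref{KOGSH}) for $\mathbb{E}[F(X_T)1_{\{\tau \geq T\}}|\mathcal{F}_{\tau \wedge t}]$, the terms $\pm\mathbb{E}[\pi(F)(X_T)|\mathcal{F}_{\tau \wedge t}]$ cancel and the sign in front of the integral flips, producing (\ref{KIGSH}).

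The main technical nuisance, and the step I would be most careful about, is the asymmetric conditioning in Theorem \ref{GCP}, whose LHS is on $\mathcal{F}_t$ while its integral term is on $\mathcal{F}_{\tau \wedge t}$, whereas Theorem \ref{GSH} conditions uniformly on $\mathcal{F}_{\tau \wedge t}$. On $\{\tau > t\}$ these two $\sigma$-algebras coincide, so the argument above goes through verbatim. On $\{\tau \leq t\}$ the LHS of (\ref{KOGSH}) vanishes because $\tau < T$, and the RHS must be shown to vanish as well; this reduces to the same identity applied at time $\tau$, where Theorem \ref{GCP} (with $\psi \equiv 1$, payoff $\pi(F)$, and $c_{\pi(F)}=0$) equates $\mathbb{E}[\pi(F)(X_T)|\mathcal{F}_\tau]$ with the corresponding integral from $\tau$ to $T$, giving the required cancellation. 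Once this bookkeeping is made explicit, no further estimates are needed.
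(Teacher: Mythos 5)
Your proposal is correct and follows essentially the same route as the paper: apply Theorem \ref{GCP} with $\psi \equiv 1$ to the payoff $\pi(F)$, use $c_{\pi(F)}=0$ to kill the constant term, and identify $F(X_T)1_{\{\tau \ge T\}}$ with $\pi(F)(X_T)1_{\{\tau \ge T\}}$ via $X_T \in D$ on $\{\tau \ge T\}$ and $\hat F = 0$ on $D$, with the knock-in case obtained by the complementary decomposition $F = \pi(F)+\pi^\bot(F)$. Your explicit treatment of the $\mathcal{F}_t$ versus $\mathcal{F}_{\tau\wedge t}$ conditioning is a point the paper's own (very terse) proof leaves implicit, but it does not change the argument.
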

\begin{proof} See Appendix \ref{pfGSH}.
\qed
\end{proof}


Theorem \ref{GSH} is the base to construct higher order semi-static hedges, that can be interpreted as asymptotic expansions of the first order hedging error. Indeed, the result states that the generalized timing risk can be decomposed into a continuum of knock-in options (extending \cite{CP}). Then, for each knock-in option in which the timing risk has been decomposed, by applying Bowie-Carr \cite{BC} strategy we can further reduce the hedging error associated to the position in barrier options as shown in the following Subsection.

\subsection{Higher Order Semi-Static Hedge}\label{sec:HO}

The contribution of the present paper is to define a {\textit{generalized timing risk}} and show how to compute first order and higher orders hedging errors (with reducing magnitude) associated to semi-static hedges of American-style barrier options. This subsection introduces and states the existence of higher order semi-static hedging strategies via the proposed iteration procedure and discusses the conditions to have the error converging to zero.\\

Theorem \ref{GSH} extends the static hedge formula proposed in \cite{CP} by giving sufficient conditions to decompose a {\textit{generalized timing risk}} into an integral of knock-in options in a multi-dimensional market model by considering a fairly large class of Markovian underlying dynamics. Then, the mathematical contribution of the paper embeds a methodological proposal for the construction of higher order semi-static hedges, interpreted as asymptotic expansions of the first order hedging error. This is done via an iterated procedure based on generalized timing risk identification, semi-static hedging strategy, error computation and re-iteration of the procedure. The convergence of the higher order semi-static hedges to an exact hedge is shown.\\

The proposed methodology to construct the higher order semi-static hedging errors leverages on the following steps:
\begin{enumerate}
\item identify the {\textit{generalized timing risk}} and apply a Carr-Picron \cite{CP} strategy, by obtaining a knock-in options decomposition; 
\item for each knock-in option in which the timing risk has been decomposed, apply Bowie-Carr \cite{BC} strategy;
\item observe that the hedging error associated to a Bowie-Carr \cite{BC} strategy is a {\textit{generalized timing risk}};
\item re-iterate from step 1.
\end{enumerate}

The first key contribution of the paper is to show that the hedging error  la \cite{BC} is a {\textit{timing risk}}; starting from this observation, a generalized \cite{CP} formula is applied to decompose it into a knock-in options representation. Then, for each knock-in option we apply a semi-static hedge (\cite{BC} type strategy) and define a second order error, since we are working with general diffusion with no symmetry. The idea is then to iterate the approach (generalized w.r.t. Carr Picron \cite{CP}) and obtain also higher order semi-static hedges and the corresponding errors. Let us consider hedging error of order $n$: this can be obtained by identifying, at each step of the iteration procedure until $n-1$, the set of knock-in options in which the hedging error has been decomposed. 
Conditions to have the error converging to zero are derived.\\

Theorem \ref{GSH} provides the result for the {\em first order hedging error} based on the semi-static hedge decompositions in terms of knock-in options. 
Let us observe that the integrand of the second term of the right-hand-side of \eqref{KOGSH} and \eqref{KIGSH}
are again a pay-off of knock-in options,
thus the formula can be iterated.
The {\em second order hedge} is
integration of infinitesimal amount $ ds $ of
\begin{equation*}
\pi^\bot \int_{\mathbf{R}^d}
h^{p^{\pi(F)}}(T-s,X_s,y)\pi (F)(y) \,dy 
\end{equation*}
for each $ s $. 
The {\em second order error} associated to the stopping time  
$ \tau $ 
is then
\begin{equation*}
\int_\tau^T \pi \int_{\mathbf{R}^d}
h^{p^{\pi(F)}}(T-\tau,X_\tau,y)\pi (F)(y) \,dy ds.
\end{equation*}
By applying Theorem \ref{GSH}
for each $ s $, 
the {\em second order hedging error}, evaluated at $ t $, can be represented as 
\begin{equation}\label{2err}
\int_0^s
\mathbb{E} [ 1_{ \{ \tau \leq u \}} \int_{\mathbf{R}^d}  \int_{\mathbf{R}^d} h^{p^{\pi(F)}}(s-u,X_u,y_1)
\pi h^{p^{\pi(F)}}(T-s,y_1,y_2)\pi (F)(y_2) \,dy_1 dy_2 | \mathcal{F}_{\tau \wedge t} ] \,du.
\end{equation}
The total error is 
obtained by integrating \eqref{2err} with respect to $s $.\\

By re-iterating the procedure, an asymptotic expansion of the hedging error can be obtained under suitable conditions. Note that we shall work in a setting independent w.r.t. the payoff function $ F $. In this section, $\pi$ means 
the correspondence $ F \mapsto 
F 1_{\{x \in D\}} - \hat{F} $ and 
$ \pi^\bot $ means $ F \mapsto F 1_{\{x \in D^c\}} +\hat{F} $.  
The setting is independent w.r.t. $F$ but dependent on function $ \pi $
or $ \pi^\bot $. \\

{In addition to Assumption \ref{aspGSH}, 
we assume that $ p^{\pi(F)} \equiv p^{\pi} $ 
can be chosen to be independent of $ F $;
it is only dependent on $ q $, $ D $ and $ \pi 
= 1 - \pi^{\bot} $, where $ \pi^{\bot} 
: C_b (D) \to C_b (D^c) $.\\

Let us introduce Assumption \ref{aspHSS} and Assumption \ref{aspHSSc}  useful to state the result given in Theorem \ref{HSS}.

\begin{Assumption}\label{aspHSS}Let us consider the following set of assumptions.
\begin{enumerate}
\item The diffusion process $ X $ has a smooth density; there exist a function 
$ q \in C^{1,2,2} ((0,\infty) \times \mathbf{R}^d \times \mathbf{R}^d) $
such that $ \mathbb{E}[ f (X_t) | X_s ]
= \int_{\mathbf{R}^d} q (t-s, X_s, y) \,f(y) \,dy $ for any bounded measurable function $ f $ on $ \mathbf{R}^d $.
\item There exists a function $ p^\pi: [0,\infty) \times 
\mathbf{R}^d \times \mathbf{R}^d $ such that $ p^F(\cdot,\cdot,y)$ 
in $C^1 \times C^2_b $ for all $y \in \mathbf{R}^d$, 
\begin{equation*}
\lim_{s \uparrow t} \int_{\mathbf{R}^d} q (s,x,z) p^\pi ({t-s},z,y) \,dz 
= q (t, x, y), 
\end{equation*}
and that for any $ F \in C_b (D) $, 
\begin{equation*}
\int_{D} p^\pi (t, X_{\tau},y) F(y) \,dy =
\int_{D^c} p^\pi (t, X_{\tau}, y) (\pi^\bot F) (y) \,dy.
\end{equation*}
\item $ L^*_y q (t,x,y)  = \partial_t q $, where $ L^*_y $ denotes 
the adjoint operator (with respect to Lebesgue measure) of $L$, acting on 
the variable $ y $. 
\item  
$ I_{0=s_0 < s_1 < \cdots < s_N < T}q (T-s_N, x, y_N) \prod_{j=1}^{N} h^{p^\pi} (s_j-s_{j-1}, y_{j}, y_{j-1}) $ 
is integrable in $ (s_1,\cdots, s_N, y_1, \cdots, y_N ) $ 
on $ [0,T]^N \times \mathbf{R}^{d N} $ for $ x \in D $, for arbitrary $ N \geq 1 $, where
$ h^{p^\pi} (s,y,z) 
= ( L_y -\partial_s) p^\pi (s, y,z) $. 
\end{enumerate}
\end{Assumption}

With Assumption \ref{aspHSS}, we can define 
family of operators $ (\mathcal{S}^{p^F} )^N_t $, $ t \in [0,T] $ and
$ N=1,2, \cdots $, inductively by for $f \in C_b (\mathbf{R}^d)$ and $x \in \mathbf{R}^d$.
\begin{enumerate}
\item We define 
\begin{equation}\label{Sp}
  (\mathcal{S}^{p^\pi}  )^1_t f(x) = 
\int_{\mathbf{R}^d} 
h^{p^{\pi}} (t,x,y) \pi(f) (y) \, dy\end{equation}.
\item For $ N \geq 2 $, we thus have:
\begin{equation*}
\begin{split}
(\mathcal{S}^{p^\pi} )^N_t f(x) = 
\int_0^t 
(\mathcal{S}^{p^\pi}  )^1_s
(\mathcal{S}^{p^\pi} )^{N-1}_{t-s} f(x)\,ds.
\end{split}
\end{equation*}
\end{enumerate} 

\begin{Assumption}\label{aspHSSc}
In addition to Assumption \ref{aspHSS}, we assume
that 
\begin{equation*}
\begin{split}
& H_N (x,t) \\
&:= \int_0^{t} \int_0^{s_{N}}\int_0^{s_{N-1}} \cdots \int_0^{s_2} 
\int_{\mathbf{R}^{dN}} \prod_{j=1}^{N} q (t-s_N, x, y_N) |h^{p^\pi} (s_{j} -s_{j-1}, y_{j}, y_{j-1})| 
d  ds_1 \cdots d s_{N} \\
& \hspace{4cm} \times |F(y_0)| dy_0 dy_1 \cdots d y_{N-1}  dy_N
\end{split}
\end{equation*}
where $ 0=s_0 $ and $ x = y_N $ 
converges to zero as $ N \to \infty $ uniformly in $ x $. 
\end{Assumption}
}


We can now state the following theoretical result based on static hedge of timing risk useful to show the convergence of the proposed re-iterated procedure.\\


\begin{Theorem}\label{HSS}
Under Assumption \ref{aspHSS}, for any $ N \geq 1 $ we have:
\begin{equation}\label{HSSform}
\begin{split}
& \mathbb{E} [ F (X_T) 1_{\{ \tau \geq T \} }|\mathcal{F}_{\tau \wedge t}  ]
\,\, (\text{resp.} E [ F (X_T) 1_{\{ \tau \leq T \} }|\mathcal{F}_{\tau \wedge t} ]) \\ 
&= \mathbb{E}[ \pi(F)(X_T) |\mathcal{F}_t ] 
\,\, 
(\text{resp.} E [ \pi^{\bot} F (X_T) 1_{\{ \tau \leq T \} }|\mathcal{F}_t ]) 
\\
& \quad \mp \sum_{k=1}^{N-1} \int_0^T
\mathbb{E} [ \pi^\bot ((\mathcal{S}^{p^{\pi}})^k_{T-s} 
(F) ) (X_s)| \mathcal{F}_{\tau \wedge t} ] \,ds \\
& \qquad \mp  \int_0^T
\mathbb{E} [  1_{ \{ \tau \leq s \}} ( (\mathcal{S}^{p^{\pi}})^N_{T-s} 
(F) ) (X_s)| \mathcal{F}_{\tau \wedge t} ] \,ds,
\end{split}
\end{equation}
where we understand $ \sum_{k=1}^{0} (\cdots) =0 $.
Furthermore, under Assumption \ref{aspHSSc}, we have that
i) $ \sum_{k=1}^N \pi^\bot (\mathcal{S}^{p^{\pi}})^k_{T-s} (x) $
converges uniformly in $ x $, ii)
$ \sum_{k=1}^\infty \pi^\bot (\mathcal{S}^{p^{\pi}})^k_{T-s} (X_s) $
is integrable in $ (s,\omega) $, iii) and the following holds
\begin{equation}\label{exact}
\begin{split}
& \mathbb{E} [ F (X_T) 1_{\{ \tau \geq T \} }|\mathcal{F}_{\tau \wedge t}  ]
\,\, (\text{resp.} \mathbb{E} [ F (X_T) 1_{\{ \tau \leq T \} }|\mathcal{F}_t ]) \\ 
&= \mathbb{E} [ \pi(F)(X_T) |\mathcal{F}_{\tau \wedge t}  ] 
\,\, 
(\text{resp.} \mathbb{E} [ \pi^{\bot} F (X_T) 1_{\{ \tau \leq T \} }|\mathcal{F}_t ]) 
\\
& \qquad \mp  \int_0^T
\mathbb{E} [ \sum_{k=1}^\infty \pi^\bot ( (\mathcal{S}^{p^{\pi}})^k_{T-s} 
(F) ) (X_s)| \mathcal{F}_{\tau \wedge t} ] \,ds,\\
\end{split}
\end{equation}
with $ \mathcal{S}^{p^\pi} $ defined in Equation (\ref{Sp}).
\end{Theorem}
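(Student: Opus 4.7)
The plan is to argue \eqref{HSSform} by induction on $N$ and then pass to the limit using the majorization in Assumption \ref{aspHSSc}. The base case $N=1$ is precisely Theorem \ref{GSH}: the empty sum $\sum_{k=1}^{0}$ vanishes, and under the postulated $F$-independence of $p^\pi$ from Section~\ref{sec:HO}, the integrand $\int h^{p^{\pi(F)}}(T-s,X_s,y)\pi(F)(y)\,dy$ appearing in \eqref{KOGSH}/\eqref{KIGSH} equals $(\mathcal{S}^{p^\pi})^1_{T-s}(F)(X_s)$ by the definition \eqref{Sp}.

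For the inductive step, suppose \eqref{HSSform} holds for some $N$. For each $s \in [0,T]$, the inner expectation $\mathbb{E}[1_{\{\tau \leq s\}}(\mathcal{S}^{p^\pi})^N_{T-s}(F)(X_s) \mid \mathcal{F}_{\tau \wedge t}]$ is the price of a knock-in option with maturity $s$ and bounded payoff $G_s := (\mathcal{S}^{p^\pi})^N_{T-s}(F)$. Applying \eqref{KIGSH} splits it into the static contribution $\mathbb{E}[\pi^\bot(G_s)(X_s)\mid \mathcal{F}_{\tau \wedge t}]$ plus a new knock-in integral on $[0,s]$ whose integrand is $\mathbb{E}[1_{\{\tau \leq u\}}(\mathcal{S}^{p^\pi})^1_{s-u}(G_s)(X_u)\mid \mathcal{F}_{\tau \wedge t}]$. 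Integrating over $s \in [0,T]$, the first piece is exactly the $k=N$ term of the outer sum. For the second, I would invoke item~4 of Assumption \ref{aspHSS} to swap $\int_0^T ds\int_0^s du = \int_0^T du \int_u^T ds$ and change variables $v=s-u$; the inner $ds$-integral then becomes $\int_0^{T-u}(\mathcal{S}^{p^\pi})^1_v(\mathcal{S}^{p^\pi})^N_{T-u-v}(F)(X_u)\,dv = (\mathcal{S}^{p^\pi})^{N+1}_{T-u}(F)(X_u)$ by the recursive definition, giving exactly the order-$(N+1)$ remainder.

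For the convergence assertion, I would unfold the $N$-fold iteration of $\mathcal{S}^{p^\pi}$ and dominate the alternating $\pi/\pi^\bot$ operations by their absolute-value majorants, using the symmetry $c_{\pi(F)}=0$ of Assumption \ref{aspGSH} to control $\hat{\cdot}$. Integrating against the transition density $q$ via the strong Markov property and rearranging, the order-$N$ remainder is bounded (up to a multiplicative constant) by $\mathbb{E}[H_N(X_{t\wedge\tau},T)\mid \mathcal{F}_{\tau \wedge t}]$, which tends to zero uniformly by Assumption \ref{aspHSSc}. The same bound dominates the tail $\sum_{k \geq N}\pi^\bot(\mathcal{S}^{p^\pi})^k_{T-s}(F)$, yielding uniform-in-$x$ convergence of the partial sums and integrability in $(s,\omega)$, after which dominated convergence interchanges the series with the integral and the conditional expectation to produce \eqref{exact}.

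The main technical obstacles I anticipate are (i) the cascade of Fubini/Tonelli interchanges needed at each step of the induction to identify $(\mathcal{S}^{p^\pi})^{N+1}$ inside the remainder, for which item~4 of Assumption \ref{aspHSS} is precisely engineered; and (ii) checking that $H_N$ -- which only involves the $N$ absolute kernels $|h^{p^\pi}|$ and makes no reference to the alternating $\pi,\pi^\bot$ structure -- genuinely dominates the true remainder. This second point reduces to the elementary pointwise inequalities $|\pi(g)|, |\pi^\bot(g)| \leq |g|+|\hat{g}|$ combined with the balance encoded in $c_{\pi(F)}=0$. Once these two points are secured, the remainder of the argument is routine.
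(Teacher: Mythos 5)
Your proposal is correct and follows essentially the same route as the paper's own proof: induction on $N$ with the base case given by Theorem \ref{GSH}, the inductive step obtained by applying \eqref{KIGSH} to the order-$N$ remainder, a Fubini interchange justified by item 4 of Assumption \ref{aspHSS}, and the change of variables that identifies $(\mathcal{S}^{p^{\pi}})^{N+1}$ via its recursive definition. Your treatment of the convergence step (domination of the remainder and of the tail of the series by $H_N$, then dominated convergence) is in fact spelled out in more detail than in the paper, which disposes of it with the remark that ``the same argument'' applies under Assumption \ref{aspHSSc}.
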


\begin{proof}See Appendix \ref{pfHSS}.
\qed
\end{proof}

Equation \eqref{HSSform}
states that the $ N $-th order hedge is given by
\begin{equation*}
\int_0^T
\mathbb{E} [ \pi^\bot ((\mathcal{S}^{p^{\pi}})^k_{T-s} 
(F) ) (X_s)| \mathcal{F}_{\tau \wedge t} ] \,ds 
\end{equation*}
and the error with first to $ N $-th order hedge is 
\begin{equation*}
\int_0^T
\mathbb{E} [  1_{ \{ \tau \leq s \}} ( (\mathcal{S}^{p^{\pi}})^N_{T-s} 
(F) ) (X_s)| \mathcal{F}_{\tau \wedge t} ] \,ds, 
\end{equation*}
and 
Equation \eqref{exact}
claims that, not only 
the error converges to zero, but
\begin{equation*}
\int_0^T
\mathbb{E} [ \sum_{k=1}^\infty \pi^\bot ( (\mathcal{S}^{p^{\pi}})^k_{T-s} 
(F) ) (X_s)| \mathcal{F}_{\tau \wedge t} ] \,ds
\end{equation*}
hedges the barrier option without error.\\

The proposed methodology is based on the key step of  timing risk identification; then, semi-static hedging strategy application via knock-in options decomposition and Bowie-Carr \cite{CP} strategy application to each knock-in option allow to construct not only first order hedging error but also higher orders errors. The convergence of the higher order semi-static hedges to an exact hedge is then shown via the result stated in Theorem \ref{HSS}.


\section{Applications}\label{secsymm}

This Section provides an illustration of the main theoretical results stated in the previous Section under two special settings. The application of the mathematical results is presented for i) a symmetric case, ii) one dimensional case.  
{
More general multi-dimensional cases
are studied in \cite{FJY2}.
}

\subsection{Symmetric Case}
This subsection considers the case of a diffusion that is {\textit{symmetric}} under the reflection with respect to a hyperplane:

\begin{equation*}
D := \{ x \in \mathbf{R}^d | \langle x, \gamma \rangle > k \},
\end{equation*}
where $ \gamma \in \mathbf{R}^d $ with $ |\gamma| =1 $, 
and $ k \in \mathbf{R} $; in this case, $ \hat{F} $ can be constructed by
the associated reflection as
\begin{equation*}
\theta (x) = x - 2 {\langle \gamma, x \rangle \gamma}
+ 2 
{k \gamma} 
= \left(I- 2 
{\gamma \otimes \gamma} 
\right) x 
+ 2 
{k \gamma}. 
\end{equation*}

Let $ X $ be a diffusion on $ \mathbf{R}^d $, the infinitesimal generator of
which is given by
\begin{equation}\label{genrt}
\frac{1}{2} A(x) \cdot \nabla^{\otimes 2} + b(x) \cdot \nabla
\equiv \frac{1}{2} \sum_{i,j} a_{i,j} (x) \frac{\partial^2}{\partial x_i \partial x_j}
+ \sum_i b_i (x) \frac{\partial}{\partial x_i}
\end{equation}
where the diffusion matrix $ A \equiv (a_{i,j}) $ 
and the drift vector $ b \equiv (b_i)$ satisfy 
Assumption \ref{aspdiff}.

{
\begin{Assumption}\label{aspdiff}
There exist two positive constants, namely $ m $ and $ M $, such that 
\begin{equation*}
m |y|^2 \leq 
\langle A (x) y,y \rangle 
\leq M |y|^2  \quad \forall x,y \in \mathbf{R}^d 
\end{equation*}
with $ A \equiv (a_{i,j}) $  being the diffusion matrix 
and $ b \equiv (b_i)$ the drift vector; $ a_{ij}, b_j  $ have any order of derivatives, 
which are all bounded above.


\end{Assumption}

We note that $A$ and $b$ are Lipschitz continuous under the assumption.
In particular, if we put
\begin{equation*}
a_{\infty} 
:= \left\{ \sum_{i,j} d \max_k \left( \sup_{x \in \mathbf{R}^d} 
| \partial_k a_{i,j} (x) | \right)^2 \right\}^{\frac{1}{2}},
\end{equation*}
we have that
\begin{equation*}
\Vert A (x) - A(y) \Vert \leq a_{\infty} | x -y |.
\end{equation*}

\begin{Remark}
{\em 
Remark that Assumption \ref{aspdiff} implies what follows (see e.g. \cite[Theorem 1.11, Theorem 1.15]{MR0181836}). 
The transition density of $ X $ is given by
\begin{equation*}
q(t,x,y) = P (X_t \in dy |X_0=x)/dy.
\end{equation*}
This transition density exists and has the following properties: i) it is twice continuously differentiable in $ (x,y) $, ii) it is continuously differentiable in $ t $, ii) for any 
$M_0 > M $ and some constant $C_q >0$, it satisfies:
\begin{equation*}
q(t,x,y) \leq  C_q t^{-\frac{d}{2}}\exp\{-\frac{ |x-y|^2}{4M_0t}\},
\end{equation*}
\begin{equation*}
|\nabla q(t,x,y) | \leq  C_q 
t^{-\frac{d+1}{2}}\exp\{-\frac{ |x-y|^2}{4M_0t}\},
\end{equation*}
and 
\begin{equation}\label{adj0}
\partial_t q(t, x,y) = (L_x q)(t,x,y) = (L^*_y q)(t, x,y) 
\end{equation}
where $ L_x $ is the infinitesimal generator of $ X $ (see Equation (\ref{genrt}))
acting on variable $ x $, and 
$ L^*_y $ is the adjoint of $ L $, acting on the variable $ y $
\begin{equation*}
\begin{split}
L^*_y &= \frac{1}{2} \nabla^{\otimes 2}_y \cdot A(y) -  \nabla_y \cdot b (y) \\
& \equiv \frac{1}{2} \sum_{i,j} a_{i,j} (y) \frac{\partial^2}{\partial y_i \partial y_j} 
+ \sum_i \left( \sum_j \frac{\partial  a_{ij}}{\partial y_j}  (y)
- b_i (y) \right) \frac{\partial}{\partial y_i}
+ \frac{1}{2} 
\sum_{i,j} \frac{\partial^2 a_{ij}}{\partial y_i \partial y_j}(y)
- \sum_i \frac{\partial b_i }{\partial y_i} (y). 
\end{split}
\end{equation*}
Moreover, we have
\begin{equation*}
\int_{\mathbf{R}^d} (L^*_y q)(s, x,y) g(y) \,dy = \int_{\mathbf{R}^d} q(s, x,y) L_y g(y) \,dy
\end{equation*}
for any function $ g \in C_0^\infty (\mathbf{R}^d) $ 
(see e.g. \cite{MR0181836}). 
}
\end{Remark}

}
\begin{Proposition}\label{symm}
Suppose that $ A $ and $ b $ are symmetric
under the reflection $ \theta $; 
\begin{equation}\label{PPCS}
A (x) = \Psi A (\theta(x)) \Psi, \quad b(x) = \Psi b (\theta(x)).
\end{equation} 
where $ \Psi = I- 2 {\gamma \otimes \gamma} $, 
which is both orthogonal and symmetric.
Define $ \pi $ by $ \pi F (x) := F(x) - F( \theta (x)) $. 
Then we have that Assumption \ref{aspHSSc} is satisfied 
by taking $p^{\pi} = q$.  
In particular, 
\begin{equation}\label{XPCS}
q(t,x,y) - q(t,x, \theta(y) ) = 0, \quad x \in \partial D, y \in D,
\end{equation}
where $ q(t,x,y) :=P^x (X_t \in dy)/dy  $.
\end{Proposition}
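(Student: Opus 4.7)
The plan is to reduce everything to a single reflection identity for the transition density, namely
\begin{equation*}
q(t,x,\theta(y)) = q(t,\theta(x),y), \qquad x,y \in \mathbf{R}^d, \ t>0,
\end{equation*}
from which (\ref{XPCS}) follows immediately by specializing to $x \in \partial D$ (where $\theta(x)=x$), and from which Assumption \ref{aspHSSc} will be seen to hold \emph{trivially}. The main conceptual point is that the choice $p^\pi = q$ forces $h^{p^\pi} \equiv 0$ through the backward Kolmogorov equation, so every multiple integral $H_N$ vanishes identically. The only nontrivial step is the reflection-invariance of $q$, which is where the symmetry hypothesis (\ref{PPCS}) is used.

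For that step, I would write $X$ as a weak solution of $dX_t = b(X_t)\,dt + \sigma(X_t)\,dW_t$ with $\sigma\sigma^\top = A$, and apply It\^o's formula to the affine involution $\theta(z) = \Psi z + 2k\gamma$. Since $\theta$ is affine and $\Psi^\top = \Psi$, $\Psi^2 = I$, one gets
\begin{equation*}
d\theta(X_t) = \Psi b(X_t)\,dt + \Psi \sigma(X_t)\,dW_t,
\end{equation*}
whose drift at position $\theta(X_t)$ is $\Psi b(\theta(\theta(X_t))) = b(\theta(X_t))$ by (\ref{PPCS}), and whose diffusion matrix is $\Psi A(X_t) \Psi = A(\theta(X_t))$ by the same hypothesis. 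By weak uniqueness (which holds under the smoothness and ellipticity in Assumption \ref{aspdiff}), $\theta(X)$ started from $x$ is therefore a weak solution of the \emph{same} SDE, started from $\theta(x)$. Translating to densities, using $|\det \Psi|=1$, gives for every bounded measurable $f$,
\begin{equation*}
\int f(y) q(t,\theta(x),y)\,dy = \mathbb{E}_x[f(\theta(X_t))] = \int f(\theta(y)) q(t,x,y)\,dy = \int f(y) q(t,x,\theta(y))\,dy,
\end{equation*}
so that $q(t,\theta(x),y) = q(t,x,\theta(y))$; the specialization to $x \in \partial D$ yields (\ref{XPCS}).

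With this identity in hand, I would verify Assumption \ref{aspHSS} using $p^\pi = q$. The regularity and Gaussian-type bounds required for points 1 and 3, together with the Chapman--Kolmogorov identity of point 2, are exactly what is recalled in the Remark after Assumption \ref{aspdiff}. For the symmetry required in point 2, the definition $\pi F(x) = F(x) - F(\theta(x))$ yields $\pi^\bot F(x) = F(\theta(x))$; then the change of variables $y \mapsto \theta(y)$ transforms
\begin{equation*}
\int_{D^c} q(t,X_\tau,y)\,(\pi^\bot F)(y)\,dy = \int_{D} q(t,X_\tau,\theta(y))\,F(y)\,dy,
\end{equation*}
which matches $\int_D q(t,X_\tau,y)F(y)\,dy$ by (\ref{XPCS}) since $X_\tau \in \partial D$. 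Point 4 reduces to the identity $L^*_y q = \partial_t q$, already noted in (\ref{adj0}).

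Finally, for Assumption \ref{aspHSSc}, a direct computation using the backward Kolmogorov equation $L_y q(s,y,z) = \partial_s q(s,y,z)$ from (\ref{adj0}) gives
\begin{equation*}
h^{p^\pi}(s,y,z) = (L_y - \partial_s)\,q(s,y,z) \equiv 0.
\end{equation*}
Hence every integrand defining $H_N(x,t)$ contains the vanishing factor $\prod_{j=1}^N |h^{p^\pi}(s_j - s_{j-1}, y_j, y_{j-1})|$, so $H_N \equiv 0$ for all $N \geq 1$ and the convergence required in Assumption \ref{aspHSSc} is automatic. The main obstacle in the argument is the reflection step: one must take care that the SDE admits a unique weak solution (which is standard under Assumption \ref{aspdiff}) so that the equality in law of $\theta(X)$ and $X^{\theta(x)}$ really can be read off from the coincidence of their drifts and diffusion matrices.
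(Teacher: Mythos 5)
Your proof is correct and follows the same two-step skeleton as the paper's: (i) establish the reflection symmetry of the transition density so that the boundary identity \eqref{XPCS} and the symmetry condition in point 2 of Assumption \ref{aspHSS} hold with $p^\pi = q$, and (ii) observe that the backward Kolmogorov equation in \eqref{adj0} forces $h^{q} = (L_y - \partial_s)q \equiv 0$, so every $H_N$ vanishes and Assumption \ref{aspHSSc} holds trivially. The only substantive difference is in step (i): the paper simply cites the equivalence-in-law result of \cite{AI} (Lemma 3.3) for diffusions with $\theta$-symmetric coefficients, whereas you supply a self-contained derivation via It\^o's formula applied to the affine involution $\theta$ together with weak uniqueness under Assumption \ref{aspdiff}. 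Your argument is sound (the identification of the drift $\Psi b(X_t) = b(\theta(X_t))$ and diffusion matrix $\Psi A(X_t)\Psi = A(\theta(X_t))$ via \eqref{PPCS} is exactly what makes $\theta(X)$ solve the same SDE from $\theta(x)$), and it makes the proposition independent of the external reference at essentially no extra cost.
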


\begin{proof}See Appendix \ref{pfsymm}.
\qed
\end{proof}

Let us look at the symmetry of the diffusion more closely. 
The formula (\ref{XPCS})
can be understood as a consequence that
$ q $ is approximated by  
\begin{equation}\label{DPCS}
p^{A,b}_t (x,y) = (2 \pi)^{-\frac{d}{2}} \{\det A(y)t\}^{-\frac{1}{2}} 
e^{- \frac{1}{2t} \langle A(y)^{-1} (x -y-b(y) t), x -y-b(y) t \rangle},
\end{equation}
through, an Euler-Maruyama approximation of the corresponding stochastic differential equation.
Observe that $ p^{A, b} $ satisfies
\begin{equation*}
p^{A,b}_t (x,y) - p^{A,b}_t (x, \theta(y) ) = 0, 
\quad x \in \partial D, y \in \mathbf{R}^d 
\end{equation*}
is a direct consequence of (\ref{PPCS}). 
In fact, since $ \Psi^2 = I $ and $ x = \theta (x) $ for $ x \in \partial D $,
\begin{equation*}
\begin{split}
& p^{A,b}(t,x, \theta(y) ) \\
&= 
(2 \pi)^{-\frac{d}{2}} \{\det A(\theta(y))t\}^{-\frac{1}{2}} 
e^{- \frac{1}{2t} \langle A(\theta(y))^{-1} (x -\theta(y)-b(\theta(y)) t), x -\theta(y)-b (\theta(y)) t \rangle} \\
&= (2 \pi)^{-\frac{d}{2}} \{\det \Psi A(y) \Psi t \}^{-\frac{1}{2}} 
e^{- \frac{1}{2t} \langle A(y)^{-1} \Psi (\theta(x) -\theta(y)
- \Psi b (y) t), \Psi( \theta(x) -\theta(y)- \Psi b(y) t)  \rangle} \\
& =(2 \pi)^{-\frac{d}{2}} \{\det A(y) t \}^{-\frac{1}{2}} 
e^{- \frac{1}{2t} \langle A(y)^{-1} \Psi^2 (x -y
- b (y) t), \Psi^2( x -y- b(y) t)  \rangle} \\
&= p^{A,b}(t,x,y).  
\end{split}
\end{equation*}


\begin{Remark}[Asymmetry of Stochastic Volatility Models]
{\em
Let us consider the case where 
$ \gamma = (1,0,\cdots, 0) $; the knock-out condition 
is only dependent on the first variable
since $ D = \{ x_1 > k \} $. We can interpret the first variable to be the price process of the 
underlying, and the second to be its volatility, and so on. Note that, by assuming the symmetry 
(\ref{PPCS}) on $ A $, we can write 
\begin{equation*}
A (x) - A(\theta(x) ) = A (x) - \Psi A (x) \Psi 
= \Psi ( \Psi A (x) - A (x) \Psi)
= \Psi [\gamma \otimes \gamma, A(x) ]  
\end{equation*}
for $ x \not\in \partial D $, 
and therefore
\begin{equation*}
| A (x) - A(\theta(x) ) | = 2 \sum_{j=1}^d (a_{1,j} (x))^2. 
\end{equation*}
Based on this result, $ A $ is not continuous at $ \partial D $
unless ``volatility of volatility" $ a_{1,2} (x) $ is zero at
the hyperplane. The continuity of the volatility matrix 
and the non-vanishing of the correlation terms are not compatible.
If we stick to both requirement, we need to abandon 
the symmetry (\ref{PPCS}).}
\end{Remark}

\if8
\subsection{Symmetrization}
In this section, we consider the cases where
$ X $ is a generic diffusion process whose infinitesimal generator 
is (\ref{genrt}) with Assumption \ref{aspdiff}, 
but the coefficients $ A $ and  $ b $ 
do not necessarily satisfy (\ref{PPCS}). 

Let  
\begin{equation*}
\tilde{A} (x) = 
\begin{cases}
A(x) & x \in D \\
\Psi A (\theta(x)) \Psi & x  \not\in D,
\end{cases}
\end{equation*}
the symmetrization of $ A $ with respect to $D$ introduced in \cite{AI}. 

Let 
$ p^{\tilde{A},0} $ be the density 
in terms of the notation defined in (\ref{DPCS}) 
 and 
$\pi F(x) = F(x) - F(\theta(x))$.
Our main result in this section is the following 
\begin{Theorem}\label{symmrz}
We assume tha $ A $ and $ \Psi $ is commutative.
Then Assumption 
\ref{aspHSSc} is fulfilled by taking $ p^\pi = p^{\tilde{A},0} $.
\end{Theorem}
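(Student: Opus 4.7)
The plan is to verify that $p^\pi := p^{\tilde A,0}$, the frozen-coefficient Gaussian kernel of (\ref{DPCS}) associated with the symmetrized matrix $\tilde A$ and drift zero, satisfies all four items of Assumption \ref{aspHSS} together with the series bound of Assumption \ref{aspHSSc}. The commutativity hypothesis $A(x)\Psi = \Psi A(x)$ will play two roles: algebraically, it gives $\Psi A(x)\Psi = A(x)$, hence $\tilde A(x) = A(\theta(x))$ on $D^c$, so that $\tilde A$ is continuous across $\partial D$ and inherits from $A$ uniform ellipticity, boundedness, and Lipschitz continuity; geometrically, it is exactly what leaves the Gaussian quadratic form invariant under the reflection $\theta$, even though the underlying dynamics of $X$ itself is no longer assumed symmetric.

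The technical heart of the argument is the reflection identity
\[
p^{\tilde A,0}_t(x,\theta(y)) \;=\; p^{\tilde A,0}_t(x,y), \qquad x \in \partial D,\ y \in D,
\]
from which the reflection integral requirement in item 2 of Assumption \ref{aspHSS} follows by the change of variable $y \mapsto \theta(y)$, whose Jacobian has absolute value $|\det \Psi| = 1$. For the prefactor one uses $\det \tilde A(\theta(y)) = \det(\Psi A(y)\Psi) = \det A(y) = \det \tilde A(y)$. For the exponent one combines $x - \theta(y) = \Psi(x-y)$ (valid for $x \in \partial D$, as already displayed in the discussion after Proposition \ref{symm}) with $\Psi \tilde A(\theta(y))^{-1}\Psi = A(y)^{-1} = \tilde A(y)^{-1}$, an identity that uses commutativity crucially. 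This is the only place where the commutativity hypothesis enters in an essential way, and it is the main obstacle of the whole program: without it, neither the continuity of $\tilde A$ across $\partial D$ nor the symmetry of the Gaussian quadratic form would survive.

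The remaining verifications will be standard parametrix analysis. Smoothness of $p^{\tilde A,0}$ in its first two arguments is explicit; the delta limit in item 2 of Assumption \ref{aspHSS} holds because $p^{\tilde A,0}(t-s,\cdot,y)$ is an approximate identity at $y$ as $s \uparrow t$ while $q(s,x,\cdot)$ is continuous; item 3 is just (\ref{adj0}). Lipschitz continuity of $\tilde A$ (secured by commutativity) and the classical Gaussian bounds of \cite{MR0181836} yield
\[
|h^{p^\pi}(s,y,z)| \;=\; |(L_y - \partial_s) p^{\tilde A,0}(s,y,z)| \;\leq\; C\, s^{-(d+1)/2}\exp\!\left(-c\,\frac{|y-z|^2}{s}\right),
\]
since $|a_{ij}(y) - \tilde a_{ij}(z)| \leq C|y-z|$ absorbs one derivative in the second-order part and the drift term contributes at the same order; this gives item 4. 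Finally, iterating this estimate along the simplex $0 \leq s_1 \leq \cdots \leq s_N \leq t$ produces the standard factorial bound
\[
H_N(x,t) \;\leq\; C\,\|F\|_\infty\,\frac{(C\sqrt{t})^N}{\Gamma(N/2+1)},
\]
uniformly in $x \in D$, which tends to zero as $N \to \infty$ and establishes Assumption \ref{aspHSSc}.
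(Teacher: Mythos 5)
First, a caveat: the paper contains no proof of this statement. Theorem \ref{symmrz} sits inside an \verb|\if8| block that is excluded from compilation, and the appendix section it points to does not exist; the hard analysis is deferred to the companion paper \cite{FJY2}. So your proposal can only be judged on its own merits. The algebraic half of your argument is sound: commutativity indeed gives $\tilde{A}(x)=A(\theta(x))$ on $D^c$, hence continuity, uniform ellipticity and global Lipschitz continuity of $\tilde{A}$, and your verification of the reflection identity $p^{\tilde{A},0}_t(x,\theta(y))=p^{\tilde{A},0}_t(x,y)$ for $x\in\partial D$ (via $\det\Psi A\Psi=\det A$, $x-\theta(y)=\Psi(x-y)$, and $\Psi A^{-1}\Psi=A^{-1}$) is correct, as is the deduction of item 2 of Assumption \ref{aspHSS} by the measure-preserving change of variable $y\mapsto\theta(y)$.

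The gap is in the quantitative step, and it is exactly where the real difficulty of the symmetrization approach lies. You bound $h^{p^\pi}(s,y,z)=(L_y-\partial_s)p^{\tilde{A},0}(s,y,z)$ by writing its second-order part as $\tfrac12\sum_{ij}(a_{ij}(y)-\tilde{a}_{ij}(z))\partial^2_{y_iy_j}p^{\tilde{A},0}$ and claiming $|a_{ij}(y)-\tilde{a}_{ij}(z)|\le C|y-z|$. This inequality is false when $y$ and $z$ both lie in $D^c$: there $\tilde{a}_{ij}(z)=a_{ij}(\theta(z))$, so the Lipschitz bound only gives $C|y-\theta(z)|$, and for points on the same side of the hyperplane $|y-\theta(z)|\ge|y-z|$ and can be of order one even when $y=z$ (it equals $2\,\mathrm{dist}(z,\partial D)$). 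Since $L$ is the generator of the \emph{original} process $X$ while the parametrix is built from the \emph{symmetrized} matrix, the mismatch $a(y)-\tilde{a}(y)$ is genuinely $O(1)$ on $D^c$, and the variables $y_j$ in Assumption \ref{aspHSS}(4) and in $H_N$ range over all of $\mathbf{R}^d$ (the path after $\tau$ leaves $D$). Losing that factor of $|y-z|$ turns your bound into $Cs^{-(d+2)/2}e^{-c|y-z|^2/s}$, whose space integral is $O(s^{-1})$, so even the single-step integrability in item 4 is not established, and the factorial bound on $H_N$ collapses with it. Closing this gap requires an additional idea — e.g.\ splitting $a(y)-\tilde{a}(z)=(a(y)-\tilde{a}(y))+(\tilde{a}(y)-\tilde{a}(z))$ and exploiting cancellation of the $O(1)$ piece against the $\theta$-antisymmetry of $\pi(F)$, or reformulating the whole hedge in terms of the symmetrized diffusion $\tilde{X}$ (which agrees in law with $X$ up to $\tau$) so that $L_y$ is replaced by $\tilde{L}_y$ and the difference $\tilde{a}(y)-\tilde{a}(z)$ really is Lipschitz-controlled. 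Your proposal does neither, so as written it does not prove the theorem.
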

For a proof, see section \ref{pfsymmrz}
\fi

\subsection{One Dimensional Case}\label{subs:onedim}
This Subsection analyzes and discusses the application of the proposed methodology to the one dimensional case, where first and second order hedging errors can be derived in analytic closed form.\\ 

Let us consider the (time homogeneous) one dimensional case:
\begin{equation}
d X_t = \sigma (X_t) \, dW_t + \mu (X_t) \, dt,
\end{equation}
where $ \sigma > 0 $ and 
$ \mu $ are smooth functions with linear growth, 
while 
$ D = [K, \infty) $, with $ X_0 > K $. 

The Lamperti transform:
\begin{equation*}
    s(x) = \int^x \frac{1}{\sigma (y)} dy
\end{equation*}
reduces the problem to the one for  
Brownian motion with drift:
\begin{equation*}
Y_t = s (X_t) = s (X_0) 
+ W_t  +\int_0^t \left( \frac{\mu (s^{-1}(Y_s))}{\sigma (s^{-1}(Y_s))}-\frac{1}{2}\sigma' (s^{-1} (Y_s)) \right) \,ds,
\end{equation*}
with $ F_Y := F\circ s^{-1} $ replacing $ F $ and 
\begin{equation*}
    \tau = \inf \{ s >0 : X_s < K \} =
    \inf \{ s > 0 : Y_s < s (K) \}.
\end{equation*}

\begin{Theorem}\label{1dim}
We assume that $ \sigma > 0 $ is $C^1_{b}$ and $||\mu / \sigma ||_{\infty} < \infty$. 
Let $ \pi $ be such that 
$ \pi^\bot f (x) = f (2 s(K) -x) $ for $ f \in C_b ([s(K), \infty)) $. Then, 
\begin{equation*}
    p^\pi(t,x,y) = \frac{1}{\sqrt{2\pi t} } \exp \left( 
    - \frac{1}{2t} (x-y)^2 \right),
\end{equation*}
with 
\begin{equation}\label{hppi0}
    h^{p^\pi} (t,x,y) = 
	\left\{- \frac{\mu (s^{-1}(x))}{\sigma (s^{-1}(x))}+
	\frac{1}{2} \sigma' (s^{-1} (x)) \right\}\frac{(x-y)}{t} p^{\pi}(t,x,y) 
\end{equation}
satisfies Assumption \ref{aspHSSc}. Consequently, the exact expansion \eqref{exact} holds. 
\end{Theorem}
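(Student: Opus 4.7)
The plan is to Lamperti-reduce to Brownian motion with drift, take $p^\pi$ to be the driftless Gaussian heat kernel (which is automatically invariant under reflection across the flat boundary $\{s(K)\}$), and verify the parametrix decay in Assumption \ref{aspHSSc} by classical iterated Gaussian convolution estimates. Under the substitution $Y = s(X)$, the process $Y$ satisfies $dY_t = dW_t + \tilde\mu(Y_t)\,dt$ with $\tilde\mu(y) = \mu(s^{-1}(y))/\sigma(s^{-1}(y)) - \tfrac12\sigma'(s^{-1}(y))$, the absorbing set is $[s(K),\infty)$, and $\tilde\mu$ is bounded under the hypotheses $\sigma \in C^1_b$, $\sigma>0$ and $\|\mu/\sigma\|_\infty<\infty$. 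Consequently $Y$ has a smooth transition density $q$ with standard Aronson-type Gaussian upper bounds and the adjoint identity $L^*_y q = \partial_t q$, giving items 1 and 3 of Assumption \ref{aspHSS} for free.

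Next I would take
\begin{equation*}
p^\pi(t,x,y) = (2\pi t)^{-1/2}\exp\!\bigl(-\tfrac{1}{2t}(x-y)^2\bigr).
\end{equation*}
The delta-limit $\lim_{s\uparrow t}\int q(s,x,z)p^\pi(t-s,z,y)\,dz = q(t,x,y)$ follows from $p^\pi(t-s,\cdot,y)\to\delta_y$ tested against the continuous $q$. The reflection identity in item 2 of Assumption \ref{aspHSS} reduces, via the substitution $y\mapsto 2s(K)-y$, to the symmetry $p^\pi(t,s(K),2s(K)-y) = p^\pi(t,s(K),y)$, which is obvious since the Gaussian is centered at the boundary. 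Since $p^\pi$ solves the driftless heat equation, a direct differentiation against $L_x = \tfrac12\partial_x^2 + \tilde\mu(x)\partial_x$ gives $h^{p^\pi} = (L_x-\partial_t)p^\pi = \tilde\mu(x)\partial_x p^\pi$; substituting $\partial_x p^\pi = -(x-y)t^{-1}p^\pi$ produces formula \eqref{hppi0}.

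The technical heart of the proof is Assumption \ref{aspHSSc}. From boundedness of $\tilde\mu$ and the elementary inequality $|z|e^{-z^2/2}\le C_\alpha e^{-z^2/(2\alpha)}$ (valid for any $\alpha>1$) one obtains
\begin{equation*}
|h^{p^\pi}(t,x,y)| \leq C\,\frac{|x-y|}{t}\,p^\pi(t,x,y) \leq C'\,t^{-1/2}\,g_{\alpha t}(x,y),
\end{equation*}
where $g_{\alpha t}$ denotes the Gaussian kernel of variance $\alpha t$. Combining this with the Gaussian upper bound on $q$, the integrand of $H_N$ is dominated by $\prod_{j=1}^N (s_j-s_{j-1})^{-1/2}$ times a convolution chain of Gaussians. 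Such a chain remains a Gaussian kernel in $(x,y_0)$ whose total variance stays bounded by a constant multiple of $t$, so its integral against the bounded $|F|$ is uniformly controlled in $x$. The remaining time-simplex integral gives the classical Beta-function bound
\begin{equation*}
\int_{0<s_1<\cdots<s_N<t} \prod_{j=1}^N (s_j-s_{j-1})^{-1/2}\,ds_1\cdots ds_N \;=\; \frac{\Gamma(1/2)^N}{\Gamma(N/2+1)}\,t^{N/2},
\end{equation*}
which decays super-exponentially in $N$. This yields $H_N \to 0$ uniformly in $x$, and Theorem \ref{HSS} then delivers \eqref{exact}.

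The main obstacle is the critical singularity $|x-y|/t$ carried by $h^{p^\pi}$: it is only borderline-integrable in time, so to iterate the convolution one must absorb the factor $|x-y|$ into a Gaussian of slightly wider variance before performing the simplex integration. Once this bookkeeping is carried out, the remaining argument is a textbook parametrix series estimate.
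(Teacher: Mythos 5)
Your proposal is correct and follows essentially the same route as the paper's proof: Lamperti reduction to Brownian motion with bounded drift, the driftless Gaussian heat kernel as $p^\pi$ (so that $h^{p^\pi}=\tilde\mu\,\partial_x p^\pi$), absorption of the critical factor $|x-y|/t$ into a Gaussian of doubled variance at the cost of $t^{-1/2}$, and the telescoping Beta-function bound $\prod_k B(k/2,1/2)=\Gamma(1/2)^N/\Gamma((N+2)/2)$ on the time simplex to get super-exponential decay of $H_N$. The paper's appendix carries out exactly this estimate with the explicit constants $C_b=\|\mu/\sigma\|_\infty+\tfrac12\|\sigma'\|_\infty$ and $K_{1/2}=\|x^{1/2}e^{-x}\|_\infty$, so no further comparison is needed.
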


\begin{proof}See Appendix \ref{pf1dim}.
\qed
\end{proof}

\begin{Remark}
{\em
In $ d \geq 2 $-dimensional case with 
\begin{equation*}
    d X_t = \sum_{j=0}^d V_j (X ) \circ dW^j_t,
\end{equation*}
with the convention $ dW^0 = dt $, 
if the smooth vector fields $ V_1, \cdots V_d $ commute each other,
then it is ``diffeomorphic" to standard Brownian motion 
modulo drift and a straightforward extension of Theorem \ref{1dim} holds.
Most stochastic volatility models are instead
``diffeomorphic" to hyperbolic Brownian motion modulo drift\footnote{For a related discussion, see \cite{II}.},
as pointed out in \cite{HL}. Hyperbolic cases need a different function
$ \pi $, which is not easy to identify. 
}
\end{Remark}

\section{Hedging Error}\label{simu}
To illustrate an application of the proposed framework, analytic results in the one dimensional case are derived for both i) the first order hedging error, ii) the second order hedging error, obtained by re-iterating the timing risk identification and hedging (semi-static hedge decomposition and representation in terms of knock-in options). Let us briefly recall the mathematical setting described in Subsection \ref{subs:onedim} and then state the main results in Proposition \ref{p:FOHE} and Proposition \ref{p:SOHE} below. A comparison between the first and second order hedging errors is provided, by showing the materiality of the hedging benefit gain (i.e. in terms of error reduction in absolute value) derived by passing from order one to order two and re-iterating the hedging procedure.\\

Assumption \ref{oneDim} details the mathematical setting characterizing the one dimensional case.

\begin{Assumption}\label{oneDim}
The dynamics of the underlying process $X_t$ is described by the following SDE:
\begin{equation}
dX_t=\sigma dW_t + \mu dt,
\end{equation}
where $\mu = r-\frac{1}{2}\sigma^2$, with $r, \sigma>0$ and $\mu:=r-\frac{1}{2}\sigma^2$; its solution is thus given by:
$$
X_t=X_0+\sigma W_t + \mu t.
$$
Notice that $X_t$ is a random variable with distribution

$$X_t\sim N(X_0+\mu t,\sigma^2 t).$$

{Let us consider a {\textit{knock-in}} barrier option with underlying $X_t$ and the following option-related quantities:
\begin{itemize}
\item $\log K$ the barrier for the knock-in condition,
\item $K{'}$ the strike price of the {\textit{hedging}} options,
\item $\tau$ the hitting time related to the {\textit{timing risk}}.
\end{itemize}
Then, let the payoff function be
\begin{equation}\label{payoff1}
F(x)=(e^{x}-K')^+,
\end{equation}
where $ 0 < K \leq K' $, and consider function $\theta(\cdot)$ defined as
$$\theta(x) :=2 {\log K}-x.$$
Thus, by substitution, we can write:
\begin{equation}\label{payoff2}
F(\theta(x))=(e^{2{\log K}-x}-K')^+. 
\end{equation}}
\end{Assumption}

The analytic results for both the first and second order hedging errors are reported and discussed in the following subsections.
Results are obtained by leveraging on: identification of {\textit{generalized timing risk}}; {\textit{hedging error}} definition (Equation \eqref{hdgerr}) and equivalence between hedging error and timing risk; semi-static hedge decomposition in terms of infinitesimal amount of knock-in options; semi-static hedge strategy applied to each knock-in option.

\subsection{The first order hedge and its error}
Let us recall Equation \eqref{hdgerr} for the {\textit{hedging error}} definition:

$$\mathrm{He}^{\tau}_t:=e^{-r (T-t)}E[ E [1_{ \{\tau < T
\} } ( F(X_T) 1_{\{X_T \in D \}} - \hat{F} (X_T) ) | \mathcal{F}_\tau ]| \mathcal{F}_t]
$$
and then consider the portfolio characteristics in order to identify the first order hedging error.
Note that by substituting Equations \eqref{payoff1}-\eqref{payoff2} into the hedging error definition, we obtain an integral representation for the first order hedging error in the one dimensional case.
Indeed, under Assumption \ref{oneDim}, the first order 
hedging error associated to the stopping time $\tau$ can be written as follows:
\begin{equation}\label{he_diff}
\mathrm{He}_{\tau}^{(1)}:=
\mathbb{E}[( e^{X_{T -\tau }} -K')^+|X_0=\log K ]
-\mathbb{E}[( e^{2 \log K -X_{T -\tau }} -K')^{+}|X_{0}=\log K  ]
\end{equation}
and can be interpreted as difference between two option prices.\\

The first order hedging error can be characterized in analytic closed form as stated in Proposition \ref{p:FOHE}.

\begin{proposition}\label{p:FOHE} Under Assumption \ref{oneDim}, the {\textit{first order hedging error}} at time $t$ associated to the stopping time $\tau$ is given in closed form as:
\begin{equation}\label{fohe}
\begin{split}
\mathrm{He}_{\tau}^{(1)}&
= K' \left[\mathcal{N}\left(d_1\right)-\mathcal{N}\left(d_2\right)\right]\\
& + Ke^{\frac{\sigma^2(T -\tau )}{2} } \left[
e^{\mu (T -\tau ) }
\mathcal{N}\left(d_2{+} \sigma\sqrt{T -\tau}\right)-e^{-\mu (T -\tau ) }\mathcal{N} 
\left(d_1{+} \sigma\sqrt{T -\tau }\right)\right]
\\
\end{split}
\end{equation}
with
\begin{equation*}
d_1=\frac{\log {\frac{K}{K'}-} \mu (T -\tau )}{\sqrt{\sigma^2 (T -\tau )}}, \quad d_2=\frac{\log {\frac{K}{K'}+} \mu (T -\tau )}{\sqrt{\sigma^2 (T -\tau )}},
\end{equation*}
and $\mathcal{N}(\cdot)$ indicating the cumulative distribution function of a standard normal random variable.


\end{proposition}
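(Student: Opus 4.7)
The plan is to evaluate the two expectations appearing on the right-hand side of Equation (\ref{he_diff}) separately in closed form, using the standard Black--Scholes-type evaluation of $\mathbb{E}[(e^{Y} - K')^{+}]$ for a Gaussian $Y$, and then to subtract them. No probabilistic machinery beyond elementary Gaussian integration is needed: since $X_{t}$ is an arithmetic Brownian motion with drift $\mu$ and volatility $\sigma$, both pay-offs inside the expectations are deterministic functions of a single normal random variable.

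First I would record the distributional facts. With $X_{0} = \log K$, the variable $X_{T-\tau}$ is Gaussian with mean $\log K + \mu(T-\tau)$ and variance $\sigma^{2}(T-\tau)$, while the reflected variable $2\log K - X_{T-\tau}$ is Gaussian with mean $\log K - \mu(T-\tau)$ and the same variance. Hence both expectations in (\ref{he_diff}) are of the form $\mathbb{E}[(e^{Y} - K')^{+}]$ for a Gaussian $Y$, the only difference between them being the sign of $\mu$ in the mean.

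Next I would carry out the Gaussian integral. For $Y \sim \mathcal{N}(m,s^{2})$, restricting to the event $\{Y \geq \log K'\}$ and completing the square in the integrand $e^{y}\varphi((y-m)/s)$ yields
\[
\mathbb{E}[(e^{Y} - K')^{+}] = e^{m + s^{2}/2}\,\mathcal{N}\!\left(\frac{m - \log K' + s^{2}}{s}\right) - K'\,\mathcal{N}\!\left(\frac{m - \log K'}{s}\right).
\]
Applying this with $(m,s) = (\log K + \mu(T-\tau),\, \sigma\sqrt{T-\tau})$ produces the first expectation, with $d_{2}$ playing the role of $(m-\log K')/s$; applying it with $m$ replaced by $\log K - \mu(T-\tau)$ yields the second expectation, with $d_{1}$ playing the analogous role. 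The exponential pre-factor $e^{m+s^{2}/2}$ becomes $K\,e^{\pm \mu(T-\tau) + \sigma^{2}(T-\tau)/2}$ in the two cases, respectively.

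Finally, subtracting the reflected expectation from the direct one and factoring $K\,e^{\sigma^{2}(T-\tau)/2}$ out of the exponential contribution regroups the terms exactly into the announced form. I do not anticipate a genuine obstacle; the main thing to watch for is sign bookkeeping, since the reflection $x \mapsto 2\log K - x$ flips the sign of $\mu$ in the Gaussian mean, which is precisely what converts $d_{2}$ into $d_{1}$ between the two terms, and ensures that the $K'$-contributions enter as $\mathcal{N}(d_{1}) - \mathcal{N}(d_{2})$ with the correct sign. Beyond this, the entire proof reduces to a careful application of the Gaussian integral formula displayed above.
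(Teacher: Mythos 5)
Your proposal is correct and follows essentially the same route as the paper's proof: the paper likewise splits $\mathrm{He}^{(1)}_{\tau}$ into the two expectations $I-II$ from \eqref{he_diff} and evaluates each by direct Gaussian integration over the exercise region (completing the square), which is exactly your lognormal call-price formula applied with means $\log K \pm \mu(T-\tau)$. The sign bookkeeping you flag works out as you describe, yielding \eqref{fohe}.
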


\begin{proof}See Appendix \ref{pfFOHE}.
\qed
\end{proof}

\begin{figure}[ht!]
		\centering
		\includegraphics[scale=0.7]{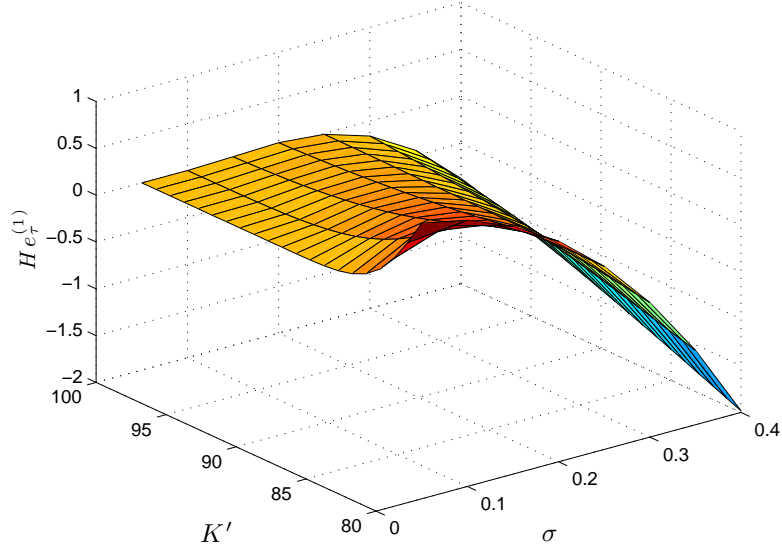}
		\caption{{\label{fig:fohe}{\bf{First order hedging error.}} The plot reports the first order hedging error  given in Equation (\ref{fohe}) as function of two variables: i) the strike of the hedging option, i.e. $K^{\prime} \in [80,100]$, ii) the diffusion coefficient of the underlying dynamics, i.e. $\sigma \in [0.05,0.4]$. The Figure is based on the following base case parameters' values: $K=80, r=0.03, T=1, \tau=0.6$.}}
		\end{figure}

Equation (\ref{fohe}) reports the first order hedging error in analytic closed form for the one dimensional case as function of all the model's parameters. Figure \ref{fig:fohe} reports the value of the first order hedging error by highlighting its dependency on two drivers: i) the strike of the hedging option $K^{\prime}$, ii) the diffusion coefficient $\sigma$ of the underlying dynamics. The plot highlights the influence of these two drivers on the magnitude of the first order hedging error that can be summarized as follows:
\begin{itemize}
\item when the strike of the hedging option $K^{\prime}$ reduces (becoming closer to $K$), the first order hedging error increases in absolute value; 
\item when the diffusion coefficient $\sigma$ of the underlying dynamics increases, the first order hedging error increases in absolute value. The intuition is that as $\sigma$ increases, a higher uncertainty characterizes the setting, turning out into an additional 'cost-component' associated to the timing risk.
\end{itemize}

\subsection{The second order hedge and its error}\label{sec:SO}

The existence of a second order semi-static hedge is stated and discussed in Section \ref{sec:HO}. Here we apply the general result to the one dimensional case in order to derive in analytic closed form the corresponding second order hedging error. A comparison between the first and second order hedging errors is also provided by showing the material error reduction obtained via the iteration procedure.\\

\begin{proposition}\label{p:SOHE}
Under Assumption \ref{oneDim}, the {\textit{second order hedging error}} at time $t$ associated to the stopping time $\tau$ is given in closed form as:

\begin{eqnarray}\label{eq:sohe}
\mathrm{He}^{(2)}_{\tau}&:=&\mu K 
e^{\frac{\sigma^2 (T-\tau)}{2}}\frac{1}{\sqrt{2 \pi \sigma^2 (s - \tau)}} 
\\&&\nonumber
 \times\int_\tau^T \, ds 
\left\{
\int_{\mathbf{R}} \mathrm{sgn} (u)
\left(
e^{\mu (s-\tau)}\mathcal{N}\left({d_3+\frac{u}{\sqrt{\sigma^2 (T-s)}}}
\right)
+
e^{-\mu (s-\tau)}\mathcal{N}\left({d_3-\frac{u}{\sqrt{\sigma^2 (T-s)}}}
\right)
\right)\, du
\right\}
\end{eqnarray}

with 
\begin{equation*}
{
d_3:=\frac{\log { \frac{K}{K'}} +\sigma^2(T-s)}{\sqrt{\sigma^2 (T-s)}} 
}
\end{equation*}
and $\mathcal{N}(\cdot)$ indicating the cumulative distribution function of a standard normal random variable.

\end{proposition}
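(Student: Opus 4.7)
The plan is to specialize the second-order residual (\ref{2err}) of Theorem \ref{HSS} (with $N=2$) to the constant-coefficient one-dimensional setting of Assumption \ref{oneDim}, evaluated at the stopping time $\tau$ itself so that $X_\tau = \log K$ and $1_{\{\tau \le u\}} = 1$ for all $u \in [\tau,s]$. By Theorem \ref{1dim}, $p^\pi(t,x,y) = (2\pi\sigma^2 t)^{-1/2}\exp(-(x-y)^2/(2\sigma^2 t))$ is the driftless Gaussian kernel, and a direct computation of $(L_x-\partial_t)p^\pi$ with $L_x = \tfrac{1}{2}\sigma^2\partial_x^2 + \mu\partial_x$ yields $h^{p^\pi}(t,x,y) = \mu\,\partial_x p^\pi(t,x,y)$. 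The reflection is $\theta(y) = 2\log K - y$, so that $\pi f(y) = f(y) - f(\theta(y))$.

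I would then evaluate the inner double integral in two stages. Using $h^{p^\pi} = \mu\,\partial_{y_1}p^\pi$ to pull the derivative outside the $y_2$-integral gives
\[
\int h^{p^\pi}(T-s,y_1,y_2)\,\pi F(y_2)\,dy_2 \;=\; \mu\,G'(y_1),
\]
where $G(y_1) := \int p^\pi(T-s,y_1,y_2)\bigl(F(y_2)-F(\theta(y_2))\bigr)\,dy_2$ is, under the correspondence $y_1 \leftrightarrow X_\tau = \log K$ and $T-s \leftrightarrow T-\tau$, exactly the first-order hedging quantity $\mathrm{He}^{(1)}$ of (\ref{he_diff}) and hence available in closed form via Proposition \ref{p:FOHE}. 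Applying $\pi_{y_1}$ converts this to $\mu\bigl(G'(y_1)-G'(\theta(y_1))\bigr)$, and the remaining $y_1$-integration against $h^{p^\pi}(s-u, X_u, y_1)$, combined with the conditional Gaussian law of $X_u$ given $X_\tau = \log K$ (mean $\log K + \mu(u-\tau)$, variance $\sigma^2(u-\tau)$), is performed by a single Gaussian convolution of total variance $\sigma^2(s-\tau)$. Substituting $u := y_1 - \log K$ turns the $\pi_{y_1}$-antisymmetry into the factor $\mathrm{sgn}(u)$ of (\ref{eq:sohe}); completing the square in the resulting Gaussian exponent produces the prefactor $\mu K e^{\sigma^2(T-\tau)/2}/\sqrt{2\pi\sigma^2(s-\tau)}$, and inserting the closed form of $G'$ yields the two normal-CDF terms $e^{\pm\mu(s-\tau)}\mathcal{N}(d_3 \pm u/\sqrt{\sigma^2(T-s)})$.

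The main obstacle is the careful bookkeeping of the interplay between the two reflections --- the one inside $\pi F$ across the strike $K'$, which produces the two branches giving rise to $e^{\pm\mu(s-\tau)}$, and the outer $\pi_{y_1}$-reflection across the barrier $\log K$, which produces the antisymmetrization responsible for $\mathrm{sgn}(u)$ --- together with the derivative coming from $h^{p^\pi} = \mu\,\partial_x p^\pi$ and the Gaussian convolution in the outer time variable, so that the symmetric contributions cancel and only the antisymmetric piece survives. Once that cancellation has been isolated, the remaining computation --- completing the square in the Gaussian exponents, shifting the argument of $\mathcal{N}$, and simplifying with $\mu = r - \tfrac{1}{2}\sigma^2$ --- is routine algebra entirely parallel to the proof of Proposition \ref{p:FOHE}, and the outer $du$- and $ds$-integrations then assemble into the formula (\ref{eq:sohe}).
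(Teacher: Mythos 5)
There is a genuine gap here: you start from the wrong object. The quantity computed in Proposition \ref{p:SOHE} is the liquidation cost at $\tau$ of the second-order hedge, namely $\int_\tau^T \mathbb{E}\bigl[\int_{\mathbf{R}} \pi\, h^{p^{\pi}}(T-s,X_s,y)\,\pi(F)(y)\,dy \,\big|\, \mathcal{F}_\tau\bigr]\,ds$ (this is Equation \eqref{sohe_int}, introduced in the paper's proof via the cost identity \eqref{eq:cost_in}). It contains a \emph{single} kernel $h^{p^\pi}$, hence a single factor of $\mu$ --- which is why the closed form \eqref{eq:sohe} carries the prefactor $\mu K$ and is a double integral in $(s,u)$. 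You instead specialize \eqref{2err}, the residual that remains \emph{after} the second-order hedge has been put on, which contains two kernels $h^{p^\pi}(s-u,X_u,y_1)$ and $h^{p^\pi}(T-s,y_1,y_2)$, is therefore $O(\mu^2)$, and carries an additional time integration $\int_0^s du$. No amount of square-completion turns an $O(\mu^2)$ triple integral into the $O(\mu)$ double integral \eqref{eq:sohe}; the telltale symptom in your write-up is that the time variable $u\in[0,s]$ of \eqref{2err} is silently rebadged as the spatial variable $u = y_1-\log K \in \mathbf{R}$ that carries the $\mathrm{sgn}(u)$ factor in \eqref{eq:sohe}. (The paper's own terminology in Section \ref{sec:HO}, which calls \eqref{2err} ``the second order hedging error evaluated at $t$'', invites this confusion, but the proposition's formula is unambiguously the single-kernel quantity.)

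The fix is structural rather than computational: delete the outer kernel $h^{p^\pi}(s-u,X_u,y_1)$ and the associated $du$-time-integration altogether, and replace that layer by the plain conditional law of $X_s$ given $X_\tau=\log K$ (Gaussian, mean $\log K+\mu(s-\tau)$, variance $\sigma^2(s-\tau)$). Everything else you describe is sound and is essentially the paper's route: $h^{p^\pi}=\mu\,\partial_x p^\pi$ so that the inner integral is $\mu G'$ with $G$ the first-order quantity of Proposition \ref{p:FOHE}; the outer $\pi$ across $\log K$ produces the antisymmetrization and hence $\mathrm{sgn}(u)$ after substituting $u=X_s-\log K$; and completing the square yields the $e^{\pm\mu(s-\tau)}$ branches and the stated prefactor. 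With the corrected starting expression that machinery reproduces the paper's four-term computation and assembles into \eqref{eq:sohe}.
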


\begin{proof}See Appendix \ref{pfSOHE}.
\qed
\end{proof}

Equation (\ref{eq:sohe}) reports in analytic closed form the second order hedging error in the one dimensional case expressed as function of all the model's parameters and represented via a simple mathematical structure. By leveraging on the proposed approach of re-iterating the timing risk identification and semi-static hedge decomposition in terms of knock-in options, higher order hedging errors with decreasing magnitude can be derived. Figure \ref{fig:sohe} depicts an example of the second order hedging error by highlighting its dependency on two drivers: i) the strike of the hedging option $K^{\prime}$, ii) the diffusion coefficient $\sigma$ of the underlying dynamics. The plot highlights the influence of these two drivers on the magnitude of the second order hedging error that can be summarized as follows:
\begin{itemize}
\item when the strike of the hedging option $K^{\prime}$ reduces (becoming closer to $K$), the second order hedging error increases in absolute value; 
\item when the diffusion coefficient $\sigma$ of the underlying dynamics increases, the second order hedging error increases in absolute value. 
\end{itemize}

\begin{figure}[ht!]
		\centering
		\includegraphics[scale=0.7]{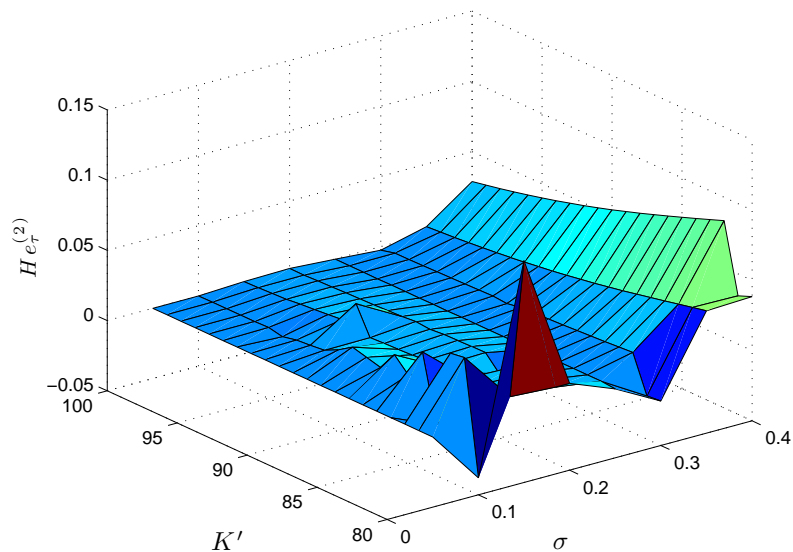}
		\caption{{\label{fig:sohe}{\bf{Second order hedging error.}} The plot reports the second order hedging error given in Equation (\ref{eq:sohe}) as function of two variables: i) the strike of the hedging option, i.e. $K^{\prime} \in [80,100]$, ii) the diffusion coefficient of the underlying dynamics, i.e. $\sigma \in [0.05,0.4]$. The Figure is based on the following base case parameters' values: $K=80, r=0.03, T=1, \tau=0.6$.}}
		\end{figure}
		
\begin{figure}[ht!]
		\centering
		\includegraphics[scale=0.7]{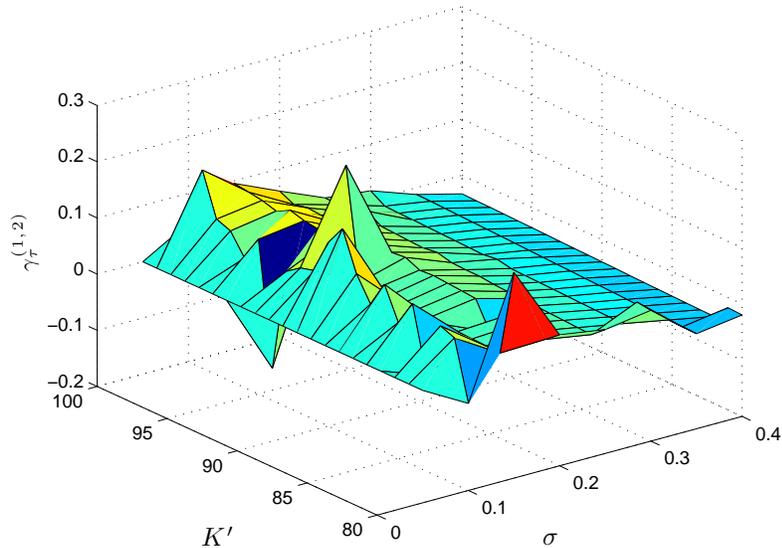}
		\caption{{\label{fig:rat}{\bf{Ratio between second order and first order hedging errors.}} The plot reports the ratio $\gamma^{(1,2)}_{\tau}$ between second order (Equation \eqref{eq:sohe}) and first order (Equation \eqref{fohe}) hedging errors given in Equation (\ref{rat_H2H1}) as function of two variables: i) the strike of the hedging option, i.e. $K^{\prime} \in [80,100]$, ii) the diffusion coefficient of the underlying dynamics, i.e. $\sigma \in [0.05,0.4]$. The Figure is based on the following base case parameters' values: $K=80, r=0.03, T=1, \tau=0.6$.}}
		\end{figure}
\subsection{First and second order hedging error comparison}\label{sec:FSOC}

Analyzing the behavior of both hedging errors (first and second order) allows to study their relative magnitude and thus the materiality of the hedging benefit gain by passing from order one to order two. Figure \ref{fig:rat} reports the ratio $\gamma^{(1,2)}_{\tau}$ between second order (Equation \eqref{eq:sohe}) and first order (Equation \eqref{fohe}) hedging errors defined as

\begin{equation}\label{rat_H2H1}
\gamma^{(1,2)}_{\tau}:=\frac{{He}^{(2)}_{\tau}}{{He}^{(1)}_{\tau}}
\end{equation}
 and its dependency on two main drivers: i) the strike of the hedging option $K^{\prime}$, ii) the diffusion coefficient $\sigma$ of the underlying dynamics. As we can see from the plot, the second order hedging error brings to a material reduction of the hedging 'cost' associated to the timing risk $\tau$, by dropping down the first order hedging error of $80-90\%$ in most cases, i.e. $1-|\gamma^{(1,2)}_{\tau}| \in [0.8,1]$. As we can see from the Figure, depending on the option specific characteristics, the hedging benefit gain can bring up to more than $90\%$ error reduction: this happens for example for high values of the diffusion coefficient, due to the different non-linear sensitivity of the first and second order hedging errors w.r.t. $\sigma$ parameter.

{

\section{Conclusions}\label{conc}
The paper addresses the problem of hedging positions on American-style options via static and semi-static hedging strategies in a multi-dimensional diffusion setting. 
In particular, the paper focuses on semi-static hedge of barrier options: this implies to deal with {\textit{timing risk}}, since the time at which the payoff will be paid by the option (even in the case of known amount) is {\textit{not known}} in advance. The starting point of our analysis is the work by \cite{CP}, where the authors show that the timing risk can be hedged via static positions in plain vanilla options. The present paper extends the static hedge formula proposed in \cite{CP} by giving sufficient conditions to decompose a {\textit{generalized timing risk}} into an integral of knock-in options in a multi-dimensional market model by considering a fairly large class of Markovian underlying dynamics. Then, a specific study of the semi-static hedge is conducted by defining the corresponding strategy based on positions in barrier options. A first order semi-static hedge is then built and discussed. The mathematical contribution of the paper then embeds a methodological proposal for the construction of higher order semi-static hedges, that can be interpreted as asymptotic expansions of the hedging error. This is done via an iterated procedure and the convergence of these higher order semi-static hedges to an exact hedge is shown. Finally, the paper provides an illustration of the main theoretical results for i) a symmetric case, ii) a one dimensional case. The first order and second order hedging errors are derived in analytic closed form for the one dimensional case and numerical results support the evidence about the material decrease of the hedging error by passing from the first to the second order ($80\%-90\%$ reduction in terms of error absolute value). 



\appendix
\section{Proofs}\label{tech_app}

This Appendix contains the proofs of the main theoretical results presented in the paper.

\subsection{Proof of Lemma \ref{fof2}}\label{prLemma}

Let us notice that pint 4. of Assumption \ref{aspGCP} allows us to write:
\begin{equation*}
\begin{split}
&\partial_s \{ \psi (s) q (s, x,z) p^F (t-s,z,y)\} \\
&= \psi'(s) q (s,x,z) p^F (t-s,z,y) + \psi (s) 
L^*_z q (s,x,z) p^F ({t-s},z, y) - \psi (s) q (s,x,z) 
\partial_s p^F ({t-s},z,y). 
\end{split}
\end{equation*}
Since $p^F(\cdot. z.y )$ is differentiable, we have: 
\begin{equation}\label{delta1}
\lim_{s \downarrow 0} 
\int_{\mathbf{R}^d} q (s,x,z) p^F ({t-s},z,y) \,dz = p^F (t, x, y).  
\end{equation}
Therefore, by using the result in (\ref{delta1}), Equation (\ref{delta2}) in point 2. of Assumption \ref{aspGCP} and the adjoint property of $ L^*_z $, 
we can write: 
\begin{equation*}\label{param2+}
\begin{split}
&\psi(t) q(t, x, y) - \psi (0) p^F(t, x, y)
\\
&= 
\lim_{\epsilon \downarrow o }\{
\psi(t- \epsilon)\int_{\mathbf{R}^d} 
q(t-\epsilon, x, z) p^F(\epsilon, z, y)dz
- \psi (\epsilon) \int_{\mathbf{R}^d}
q(\epsilon, x, z) p^F(t- \epsilon, z, y)dz\}
\\
& 
=\lim_{\epsilon \downarrow o }\{
\int_{\mathbf{R}^d}\int_{\epsilon}^{t-\epsilon} \partial_s\{
\psi(s)
q(s, x, z) p^F(t-s, z, y)\} ds dz\}
\\
& = \lim_{\epsilon \downarrow o }\{
\int_{\mathbf{R}^d}\int_{\epsilon}^{t-\epsilon} \psi'(s) q (s,x,z) p^F (t-s,z,y) + \psi (s) 
L^*_z q (s,x,z) p^F ({t-s},z, y) 
 \\& \qquad - \psi (s) q (s,x,z) 
\partial_s p^F ({t-s},z,y) ds dz\}
\\
& = \lim_{\epsilon \downarrow o }\{
\int_{\epsilon}^{t-\epsilon} \int_{\mathbf{R}^d}
q (s,x,z) h^{p^F} (t-s,z,y)  
ds dz\}
\\
& = 
\int_{0}^{t} \int_{\mathbf{R}^d}
q (s,x,z) h^{p^F} (t-s,z,y)  
ds dz.
\end{split}
\end{equation*}
The last equivalence holds due to the integrability of $h^{p^F}$. 
\qed


\subsection{Proof of Theorem \ref{GCP}}\label{pr_GCP}
Let us consider the l.h.s. of Equation (\ref{GTMR2}) and notice that the argument of the expectation can be written as:

\begin{equation}\label{errt11}
\begin{split}
& 1_{ \{ \tau \leq T \}} 
\psi(T-\tau) \mathbb{E}[ F(X_T) | \mathcal{F}_\tau]  
= 1_{ \{ \tau \leq T \}} 
\int_{\mathbf{R}^d} \psi(T-\tau) q (T-\tau, X_\tau, y) F(y) \,dy.
\end{split}
\end{equation}

Now, by leveraging on Equation (\ref{errt11}) and by applying Lemma \ref{fof2} to the r.h.s. of Equation (\ref{errt11}), we can write:
\begin{equation}\label{eq:cf}
\begin{split}
& 1_{ \{ \tau \leq T \}} 
\psi(T-\tau) \mathbb{E}[ F(X_T) | \mathcal{F}_\tau]  \\
&=
1_{ \{ \tau \leq T \}} 
\int_{\mathbf{R}^d} \left\{
 \psi(0) p^F(T- \tau,X_{\tau},y)
+\int_0^{T-\tau} \int_{\mathbf{R}^d} q (T-\tau - s, X_{\tau},z)
h^{p^F}(s,z,y)
 \,dz \, ds\right\}
F(y) \,dy
\\
&= 
1_{ \{ \tau \leq T \}} c_F 
+1_{ \{ \tau \leq T \}} 
\int_{\tau}^{T} \int_{\mathbf{R}^d} q (s-\tau , X_{\tau},z)
\left(\int_{\mathbf{R}^d}
h^{p^F}(T- s,z,y)F(y) \,dy\right)
 \,dz \, ds.
\end{split}
\end{equation}

By observing that the second term in the last row of Equation (\ref{eq:cf}) can be simplified as:
\begin{equation}
\begin{split}
&1_{ \{ \tau \leq T \}}
\int_{\tau}^{T} \int_{\mathbf{R}^d} q (s-\tau , X_{\tau},z)
\left(\int_{\mathbf{R}^d}
h^{p^F}(T-s,z,y)F(y) \,dy
\right) \, dz \, ds
\\&=1_{ \{ \tau \leq T \}}
\int_0^T 
1_{ \{ \tau \leq s \}}
\mathbb{E} [ \int_{\mathbf{R}^d}
h^{p^F}(T-s,X_s,y)F(y) \,dy
| \mathcal{F}_{\tau} ]
ds 
\\& = 
\int_0^T 
\mathbb{E} [ 1_{ \{ \tau \leq s,\ \tau \leq T  \} } \int_{\mathbf{R}^d}
h^{p^F}(T-s,X_s,y)F(y) \,dy
| \mathcal{F}_{\tau} ]
ds\\& = 
\int_0^T 
\mathbb{E} [ 1_{ \{ \tau \leq s\} } \int_{\mathbf{R}^d}
h^{p^F}(T-s,X_s,y)F(y) \,dy
| \mathcal{F}_{\tau} ]
ds,
\end{split}
\end{equation}
we get the desired result. 
\qed

\subsection{Proof of Theorem \ref{GSH}}
\label{pfGSH}
%
%

Let us recall that $X$ is a diffusion process and 
$\tau$ is the first exit time from a domain 
$ D \subset \mathbf{R}^d $; $ F $ is a measurable
function such that $ F = 0 $ on $ D^c $,
and $ \Psi \equiv 1 $. 
We associate a function $ \hat{F} $ such that $ \mathrm{supp} \, \hat{F} \subset D^c $, and write function $ \pi (F) = F - \hat{F} $. 

Thus, by applying Theorem \ref{GCP} to $ \pi (F) $, 
we have:
\begin{equation*}\label{GTMR3}
\begin{split}
\mathbb{E}[1_{ \{ \tau \leq T \} } \mathbb{E} [ \pi(F)(X_T)) |\mathcal{F}_\tau] | \mathcal{F}_t] 
= \int_0^T
\mathbb{E}[ 1_{ \{ \tau \leq s \}} 
\int_{\mathbf{R}^d}
h^{p^{\pi(F)}}(T-s, X_s, y)\pi(F) (y) dy
| \mathcal{F}_{\tau \wedge t} ] \,ds,
\end{split}
\end{equation*}
with $ h^{p^{\pi(F)}} $ defined in Equation \eqref{hpf} as 
\begin{equation*}
    h^{p^{\pi(F)}} (s,x,y) 
    := (L_x -\partial_s) p^{\pi(F)} (s, x,y).
\end{equation*}

Since we have $\mathbb{E}[ F (X_T) 1_{\{ \tau \geq T \} }|\mathcal{F}_t ]
=\mathbb{E} [ \pi(F)(X_T)1_{\{ \tau \geq T \} } |\mathcal{F}_t ] $, 
by observing that $ \hat{F}= 0 $ on $ D $, 
we obtain the results reported in Equations (\ref{KOGSH}) and (\ref{KIGSH}).
\qed

\subsection{Proof of Theorem \ref{HSS}}
\label{pfHSS}

The proof is done by induction. The case $ N = 1 $ follows by 
Theorem \ref{GSH}. Suppose that the result reported in Equation (\ref{HSSform}) is valid for $ N \geq 2 $. 
By applying \eqref{KIGSH} to $$ 1_{ \{ \tau \leq s \}}
 ( (\mathcal{S}^{p^{\pi}})^N_{T-s} (F) ) (X_s) 
\pi (F) $$ in place of $ 1_{ \{ \tau \leq s \}} F $, 
we have:
\begin{equation*}
\begin{split}
& \int_0^T
E [  1_{ \{ \tau \leq s \}} ( (\mathcal{S}^{p^{\pi}})^N_{T-s} 
(F) ) (X_s)| \mathcal{F}_{\tau \wedge t} ] \,ds \\
& = \int_0^T
\mathbb{E} [  \pi^\bot ( (\mathcal{S}^{p^{\pi}})^N_{T-s} 
(F) ) (X_s)| \mathcal{F}_{\tau \wedge t} ] \,ds \\
& \qquad 
+ \int_0^T \int_0^s \mathbb{E} [ 1_{ \{ \tau \leq u \}} 
(\mathcal{S}^{p^{\pi}})_{s-u}^1
( ( \mathcal{S}^{p^{\pi}})^N_{T-s} (F))  (X_u)| \mathcal{F}_{\tau \wedge t} ] \,duds.
\end{split}
\end{equation*}
Under Assumption \ref{aspHSS}, we
can change the order of the latter integral and thus obtain:
\begin{equation*}
\begin{split}
&\int_0^T \int_0^s \mathbb{E} [ 1_{ \{ \tau \leq u \}} (\mathcal{S}^{p^{\pi}})^1_{s-u}
( ( \mathcal{S}^{p^{\pi}})^N_{T-s} (F))  (X_u)| \mathcal{F}_{\tau \wedge t} ] \,duds \\
&= \int_0^T\mathbb{E}[1_{ \{ \tau \leq u \}} \int_u^T
(\mathcal{S}^{p^{\pi}})^1_{s-u} ( ( \mathcal{S}^{p^{\pi}})^N_{T-s} (F)) (X_u) \,ds| \mathcal{F}_{\tau \wedge t} ]du \\
&= \int_0^T \mathbb{E}[1_{ \{ \tau \leq u \}} \int_0^{T-u}
(\mathcal{S}^{p^{\pi}})^1_{s} 
( ( \mathcal{S}^{p^{\pi}})^N_{T-u-s} (F)) (X_u) \,ds| \mathcal{F}_{\tau \wedge t} ]du \\
& = \int_0^T \mathbb{E}[1_{ \{ \tau \leq u \}}  ( ( \mathcal{S}^{p^{\pi}})^{N+1}_{T-u} (F)) (X_u) \,ds| \mathcal{F}_{\tau \wedge t} ]du.
\end{split}
\end{equation*}
This proves the result stated in Equation (\ref{HSSform}). Under Assumption \ref{aspHSSc}, the same argument allows to prove the result stated in Equation (\ref{exact}) by induction.
\qed

\subsection{Proof of Proposition \ref{symm}}\label{pfsymm}
Under Assumption \ref{aspdiff}, it is sufficient to show that the following holds:
\begin{equation}\label{equvl1}
\begin{split}
\int_D q(t, X_{\tau}, y) F(y)dy 
&= \int_{D^c} q(t, X_{\tau}, y) F(\theta(y))dy.
\end{split}
\end{equation}

The equivalence in law of diffusion processes with  symmetric coefficients is stated under more general assumptions in \cite{AI}(ref. Lemma 3.3), including Equation \eqref{equvl1}. 
Let us recall that $ \theta $ is a reflection, then, by 
a change of variable $ \theta(y) = z $ 
we obtain the result stated in Equation \eqref{XPCS}. 
Moreover, by leveraging on Equation (\ref{adj0}), 
we can write
\begin{equation*}
\begin{split}
h^{q}(t,x,y) &= 
(L_x -\partial_t)q(t,x,y) = 0,
\end{split}
\end{equation*}
thus Assumption \ref{aspHSSc} is satisfied. 
\qed

\subsection{Proof of Theorem \ref{1dim}}\label{pf1dim}
Let us notice that the requirements stated at points 1. and 3. of Assumption \ref{aspHSS} are satisfied, 
since we are dealing with a Brownian motion with smooth drift; 
requirement stated at point 2. is the classical reflection principle. 
To show that also point 4. is satisfied, we estimate:
\begin{equation}\label{estimate-hpi}
\begin{split}
    |h^{p^\pi} (t,x,y)| 
&\leq C_b\frac{|x-y|}{t} \frac{1}{\sqrt{2\pi t} } e^{ 
    - \frac{(x-y)^2}{2t}  }
\\& = C_b
2^{\frac{3}{2}}t^{-\frac{1}{2}}\left\{\left(
\frac{(x-y)^2}{4t}  \right)^{\frac{1}{2}}
e^{ 
    - \frac{(x-y)^2}{4t}}
\right\}
\frac{1}{\sqrt{4\pi t} }e^{ 
    - \frac{(x-y)^2}{4t}}
\\& \leq C_b
2^{\frac{3}{2}}K_{\frac{1}{2}}t^{-\frac{1}{2}}
p^\pi(2t,x,y),
\end{split}
\end{equation}
where 
$C_b := ||\mu / \sigma ||_{\infty}  +  \frac{1}{2}||\sigma' ||_{\infty}$, 
and $K_{\frac{1}{2}}
:= || x^{\frac{1}{2}}e^{-x}||_{\infty}$. 

{Since the drift of $ Y $ is smooth and bounded, its transition density $ q $ satisfies (see e.g.\cite{MR0181836})}:
\begin{equation}\label{estimate-q}
\begin{split}
q(t, x,y)& \leq C_q t^{-\frac{1}{2}}
e^{-\frac{(x-y)^2}{4t}}
\\& = C_q 2 \pi^{\frac{1}{2}}
p^\pi(2t,x,y). 
\end{split}
\end{equation}
Therefore by (\ref{estimate-hpi}) and (\ref{estimate-q}), we have that 
\begin{equation*}
\begin{split}
&\int_0^T \int_{-\infty}^{\infty} 
\int_{-\infty}^{\infty}
|q(T-s, x,y) h^{p^\pi}(s,y,z)\pi(F_Y)(z)|dy dz ds
\\& \leq 
2^{\frac{5}{2}}\pi^{\frac{1}{2}}
C_q C_bK_{\frac{1}{2}}||F||_{\infty}
\int_0^T \int_{-\infty}^{\infty} 
\int_{-\infty}^{\infty}
s^{-\frac{1}{2}}
p^\pi(2(T-s),x,y)
p^\pi(2s,y,z)dy dz ds
\\& = 
2^{\frac{7}{2}}\pi^{\frac{1}{2}}
C_q C_bK_{\frac{1}{2}}||F||_{\infty}
T^{\frac{1}{2}}, 
\end{split}
\end{equation*}
that is, $q(T-s, x,y) h^{p^\pi}(s,y,z)\pi(F_Y)(z)$ 
is integrable in $[0,T] \times \mathbf{R}\times \mathbf{R}$. By using Equations (\ref{estimate-hpi}) and (\ref{estimate-q}), we then have:
\begin{equation*}
\begin{split}
&\int_0^{T} \int_0^{s_{N}}\int_0^{s_{N-1}} \cdots \int_0^{s_2} 
\int_{\mathbf{R}^{N}} q (T-s_N, x, y_N) |\pi(F_Y)(z)| 
\prod_{j=1}^{N} |h^{p^\pi} (s_{j} -s_{j-1}, y_{j}, y_{j-1})
d\mathbf{y} d s_1d s_2 \cdots d s_N  
\\& \leq 
 \int_0^{T} \int_0^{s_{N}}\int_0^{s_{N-1}} \cdots \int_0^{s_2} 
\int_{\mathbf{R}^{N}}
2^{\frac{3N}{2}+1}\pi^{\frac{1}{2}}
C_q C_b^NK_{\frac{1}{2}}^N||F||_{\infty}
\\& \qquad 
\times p^\pi(2(T-s_N),x,y)
\prod_{j=1}^{N}(s_j-s_{j-1})^{-\frac{1}{2}}
p^\pi(2(s_j-s_{j-1}),y_j,y_{j-1})
d\mathbf{y} d s_1d s_2 \cdots d s_N  
\\& = 
2^{\frac{3N}{2}+1}\pi^{\frac{1}{2}}
C_q C_b^NK_{\frac{1}{2}}^N||F||_{\infty}
\int_0^{T} \int_0^{s_{N}}\int_0^{s_{N-1}} \cdots \int_0^{s_2} 
\prod_{j=1}^{N}(s_j-s_{j-1})^{-\frac{1}{2}}
d s_1d s_2 \cdots d s_N  
\\
& = 
2^{\frac{3N}{2}+1}\pi^{\frac{1}{2}}
C_q C_b^NK_{\frac{1}{2}}^N||F||_{\infty}
B(\frac{1}{2}, \frac{1}{2}) 
B(1, \frac{1}{2}) 
B(\frac{3}{2}, \frac{1}{2}) 
\int_0^{T} \int_0^{s_{N}}\int_0^{s_{N-1}} \cdots \int_0^{s_5} 
s_4^{1} \prod_{j=5}^{N}(s_j-s_{j-1})^{-\frac{1}{2}}
d s_4 d s_5\cdots d s_N  
\\& = \cdots 
\\& = 
2^{\frac{3N}{2}+1}\pi^{\frac{1}{2}}
C_q C_b^NK_{\frac{1}{2}}^N||F||_{\infty}
\prod_{k=1}^{N-1} B(\frac{k}{2}, \frac{1}{2}) 
\int_0^{T}
s_N^{\frac{N-3}{2}} ds_N
\\& = 
2^{\frac{3N}{2}+2}\pi^{\frac{1}{2}}
C_q C_b^NK_{\frac{1}{2}}^N||F||_{\infty}
\prod_{k=1}^{N-1} B(\frac{k}{2}, \frac{1}{2}) 
(N-1)^{-1}
T^{\frac{N-1}{2}}. 
\end{split}
\end{equation*}
The last estimate also shows that Assumption \ref{aspHSSc} is satisfied, by proving the desired result.
\qed

\subsection{Proof of Proposition \ref{p:FOHE}.}\label{pfFOHE}
\begin{proof}By recalling Equation \eqref{he_diff}, the first order hedging error $\mathrm{He}^{(1)}_{\tau}$ at time $\tau$ can be written as:
{
\begin{equation}\label{he-dif-12}
\begin{split}
\mathrm{He}_{\tau}^{(1)}&:=
\mathbb{E}[( e^{X_{T -\tau }} -K')^+|X_0=\log K ]
-\mathbb{E}[( e^{2 \log K -X_{T -\tau }} -K')^{+}|X_{0}=\log K  ]
\\& = I - II
\end{split}
\end{equation}
on the set $\{\tau < T\}$. }
\if1 
As a consequence, we need to characterize both terms of the r.h.s. of Equation \eqref{he_diff}. We can write
\begin{equation}\label{he_a}
\mathbb{E}[( e^{X_{T-\tau}} -K')^+|X_0=\log K]
= \mathbb{E}[( e^{X_{T-t}} -K')^+|X_0=K] \bigg|_{t= \tau},
\end{equation}
and 
\begin{equation}\label{he_b}
\mathbb{E}[( e^{2 \log K -X_{T-\tau}} -K')|X_0=\log K, \tau]
= \mathbb{E}[( e^{2 \log K -X_{T-t}} -K')|X_0= \log K] \bigg|_{t= \tau}.
\end{equation}
\fi 
For the first term of \eqref{he-dif-12}, we have that 
\begin{equation}\label{ft}
\begin{split}
I & = \mathbb{E}[( e^{X_{T -\tau }} -K')^+|X_0=\log K]\\
&=
\int_{\log K'}^{\infty}(e^{y}-K')^+ 
\left\{ \frac{1}{\sqrt{2\pi\sigma^2(T -\tau )}}
e^{-\frac{(y-(\log K+\mu (T -\tau )))^2}{2\sigma^2(T -\tau )}}\right\}dy\\
&=Ke^{(\frac{\sigma^2}{2} +\mu) (T -\tau ) }
\mathcal{N}\left(\frac{\log {\frac{K}{K'}}{+} (\mu + \sigma^2)(T -\tau )}{\sqrt{\sigma^2 (T -\tau )}}\right)-K' \mathcal{N}\left(\frac{\log {\frac{K}{K'}} {+} \mu (T -\tau )}{\sqrt{\sigma^2 (T -\tau )}}\right)
\end{split}
\end{equation}

In a similar way, the second term of \eqref{he-dif-12} can be written as:
\begin{equation}\label{st}
\begin{split}
II &=\mathbb{E}[( e^{2 \log K -X_{t}} -K')^+|X_0= \log K]\\
&=
\int_{-\infty}^{2\log K-\log K'}(e^{2 \log K-y}-K')^+ 
\left\{ \frac{1}{\sqrt{2\pi\sigma^2(T -\tau )}}
e^{-\frac{(y-(\log K+\mu (T -\tau )))^2}{2\sigma^2(T -\tau )}}\right\}dy\\
&=
Ke^{(\frac{\sigma^2 }{2} - \mu )(T -\tau )}
\mathcal{N} 
\left(\frac{\log {\frac{K}{K'}}{-} (\mu -\sigma^2)(T -\tau )}{\sqrt{\sigma^2 (T -\tau )}}\right)
-K' \mathcal{N}\left(\frac{\log {\frac{K}{K'}}{-}\mu (T -\tau )}{\sqrt{\sigma^2 (T -\tau )}}\right).
\end{split}
\end{equation}
Substituting Equation \eqref{ft} and \eqref{st} into Equation \eqref{he_diff} leads to the desired result.
\qed

\end{proof}

\subsection{Proof of Proposition \ref{p:SOHE}.}\label{pfSOHE}
\begin{proof}
The second order hedge is given by integrating w.r.t. variable $s$ the options 
with payoff function
$$
\int_{\mathbf{R}} \pi^{\bot}
h^{p^{\pi}}(T-s,X_s,y)\pi (F)(y) \,dy,$$
{and the hedging error (see \eqref{eq:cost_in}) 
is given by 
$$
\int_{\mathbf{R}} {\pi}
h^{p^{\pi}}(T-s,X_s,y)\pi (F)(y) \,dy,$$
}
that can be equivalently written under the following form:

\begin{equation}\label{sp_so}
\begin{split}
   & \int_{\mathbf{R}} \pi
h^{p^{\pi}}(T-s,X_s,y)\pi (F)(y) \,dy  \\
&=   \int_{\mathbf{R}} 
h^{p^{\pi}}(T-s,X_s,y) (e^{y} -K')^+ 1_{\{ X_s { \geq} \log K \} } \,dy  \\
&  -  \int_{\mathbf{R}} 
h^{p^{\pi}}(T-s,X_s,y) (e^{2\log K - y} -K')^+ 
1_{\{ X_s { \geq}  \log K \} } \,dy  \\
& {-}   \int_{\mathbf{R}} 
h^{p^{\pi}}(T-s,2 \log K - X_s,y)(e^{y} -K')^+1_{\{ X_s { \leq}  \log K \} }  \,dy \\
& {+}   \int_{\mathbf{R}} 
h^{p^{\pi}}(T-s,2 \log K - X_s,y)(e^{2\log K - y} -K')^+1_{\{ X_s { \leq}\log K \} }  \,dy,
\end{split}
\end{equation}
with $h^{p^{\pi}}$  defined in Equation \eqref{hppi0} as
\begin{equation*}
\begin{split}
h^{p^{\pi}}(t,x,y)&=
	\mu \frac{\partial}{\partial x} 
	\frac{1}{\sqrt{2\pi  \sigma^2 t}} 
	e^{ - \frac{(x-y)^2}{2\sigma^2 t} }
\\    & = - \mu \frac{\partial}{\partial y} 
    	\frac{1}{\sqrt{2\pi  \sigma^2 t}}  e^{ 
    - \frac{(x-y)^2}{2\sigma^2 t} }
. 
\end{split}
\end{equation*}

Starting from Equation \eqref{sp_so}, the second order hedging error at $ \tau $, namely $ \mathrm{He}^{(2)}_{\tau}$, 
can be obtained by introducing the expectation operator inside the integrals, as:
\begin{equation}\label{sohe_int}
\begin{split}
\mathrm{He}^{(2)}_{\tau} &:= \int_\tau^T E[ \int_{\mathbf{R}} 
h^{p^{\pi}}(T-s,X_s,y) (e^{y} -K')^+ 1_{\{ X_s {\geq} \log K \}} \,dy 
|\mathcal{F}_\tau ] \,ds  \\
& - \int_\tau^T  E[ \int_{\mathbf{R}} 
h^{p^{\pi}}(T-s,X_s,y) (e^{2\log K - y} -K')^+ 
1_{\{ X_s {\geq} \log K \} } \,dy |\mathcal{F}_\tau ]\, ds \\
& {-} \int_\tau^T E[ \int_{\mathbf{R}} 
h^{p^{\pi}}(T-s,2 \log K - X_s,y)(e^{y} -K')^+ 
1_{\{ X_s { \leq} \log K \}} \,dy 
|\mathcal{F}_\tau ]\, ds \\
& {+} \int_\tau^T 
E[ \int_{\mathbf{R}} 
h^{p^{\pi}}(T-s,2 \log K - X_s,y)(e^{2\log K - y} -K')^+ 
1_{\{ X_s {\leq} \log K \}} \,dy
|\mathcal{F}_\tau ]\, ds.\\
\end{split}
\end{equation}
Each one of the four terms appearing in the r.h.s. of equation \eqref{sohe_int} characterizing the second order hedging error can be simplified via algebraic calculations as follows.\\

Let us consider the first term in the r.h.s. of Equation \eqref{sohe_int}. This term can be written in an equivalent form as:
\begin{equation}\label{1_split}
\begin{split}
& \int_\tau^T E[ \int_{\mathbf{R}} 
h^{p^{\pi}}(T-s,X_s,y) (e^{y} -K')^+ 1_{\{ X_s { \geq} \log K \}} \,dy 
|\mathcal{F}_\tau ] \,ds  \\
& = \int_\tau^T E[ \int_{\mathbf{R}} 
- \mu\left(\frac{\partial}{\partial y} 
    	\frac{1}{\sqrt{2\pi  \sigma^2 (T-s)}}  e^{ 
    - \frac{(X_s-y)^2}{2\sigma^2 (T-s)} }\right)
(e^{y} -K')^+ 1_{\{ X_s {\geq}  \log K \}} \,dy 
|\mathcal{F}_\tau ] \,ds 
\\
& = \int_\tau^T E[ \int_{\mathbf{R}} 
\mu \frac{1}{\sqrt{2\pi  \sigma^2 (T-s)} } 
e^{ - \frac{(y-X_s - \sigma^2 (T-s) )^2}{2\sigma^2 (T-s)} }
e^{ \frac{\sigma^2 (T-s)}{2} + X_s }
1_{\{y > \log K'\}}
1_{\{ X_s {\geq}  \log K \}} \,dy 
|\mathcal{F}_\tau ] \,ds 
\\
& = \mu\int_\tau^T 
e^{ \log K + \mu (s - \tau ) + \frac{\sigma^2 (T-\tau )}{2}}
{\int_{0}^{\infty}}
 \mathcal{N}\left( 
\frac{\log {\frac{K}{K'} +}u {+}\sigma^2 (T-s)}{\sqrt{\sigma^2 (T-s)}}\right)
\frac{1}{\sqrt{2 \pi \sigma^2 (s - \tau)}}
e^{-\frac{(u - (\mu + \sigma^2) (s-\tau ))^2}
{2 \sigma^2 (s-\tau ) } }
du ds, \\
& = \mu K\int_\tau^T 
e^{ \mu (s - \tau ) + \frac{\sigma^2 (T-\tau )}{2}}
{\int_{0}^{\infty}}
\mathcal{N}\left(
\frac{\log {\frac{K}{K'} +}u {+}\sigma^2 (T-s)}{\sqrt{\sigma^2 (T-s)}}\right)
\frac{1}{\sqrt{2 \pi \sigma^2 (s - \tau)}}
e^{-\frac{(u - (\mu + \sigma^2) (s-\tau ))^2}
{2 \sigma^2 (s-\tau ) } }
du ds,
\end{split}
\end{equation}


Let us consider the second term in the r.h.s. of Equation \eqref{sohe_int}. This term can be written in an equivalent form as:
\begin{equation*}
\begin{split}
&\int_\tau^T  E[ \int_{\mathbf{R}} 
h^{p^{\pi}}(T-s,X_s,y) (e^{2\log K - y} -K')^+ 
1_{\{ X_s {\geq}  \log K \} } \,dy  ds|\mathcal{F}_\tau ]\, ds\\
& = {-}\int_\tau^T E[ \int_{\mathbf{R}} 
\mu \frac{1}{\sqrt{2\pi  \sigma^2 (T-s)}}  e^{ 
    - \frac{(X_s-y)^2}{2\sigma^2 (T-s)} }
e^{2 \log K -y} 1_{\{ y < {2 \log K - \log K'} \}}1_{\{ X_s {\geq}  \log K \}} \,dy 
|\mathcal{F}_\tau ] \,ds 
\\
& = - \mu {K}\int_\tau^T 
e^{- \mu(s-\tau )+ \frac{\sigma^2 (T-\tau )}{2} }
{\int_{0}^{\infty}}
\mathcal{N}\left( {
 \frac{\log {\frac{K}{K'}}  - u + \sigma^2(T-s)}{\sqrt{\sigma^2 (T-s)}}
 }
\right) 
\frac{1}{\sqrt{2 \pi \sigma^2 (s - \tau)}}
e^{-\frac{(u - (\mu- \sigma^2) (s-\tau ))^2}{2 \sigma^2 (s-\tau ) }- u}
\,du \,ds.
\end{split}
\end{equation*}

Let us consider the third term in the r.h.s. of Equation \eqref{sohe_int}. This term can be written in an equivalent form as:

\begin{equation}
\begin{split}
&\int_\tau^T E[ \int_{\mathbf{R}} 
h^{p^{\pi}}(T-s,2 \log K - X_s,y)(e^{y} -K')^+ 1_{\{ X_s {\leq } \log K \}} \,dy 
|\mathcal{F}_\tau ]\, ds \\
 & = 
\int_\tau^T E[ \int_{\mathbf{R}} 
    \mu
    	\frac{1}{\sqrt{2\pi  \sigma^2 (T-s)} } 
e^{  - \frac{(2 \log K - X_s-y)^2}{2\sigma^2 (T-s)}}
    e^{y}1_{\{y > \log K'\}}1_{\{ X_s  { \leq } \log K \}} \,dy 
|\mathcal{F}_\tau ] \,ds 
\\
& = 
    \mu
\int_\tau^T E[ 
e^{ {-X_s +2 \log K} +  \frac{\sigma^2 (T-s)}{2}}
\int_{\mathbf{R}} 
    	\frac{1}{\sqrt{2\pi  \sigma^2 (T-s)} } 
e^{  - \frac{(y- 2 \log K + X_s-\sigma^2 (T-s))^2}{2\sigma^2 (T-s)}}
    1_{\{y > \log K'\}}1_{\{ X_s {\leq } \log K \}} \,dy 
|\mathcal{F}_\tau ] \,ds 
\\
& = 
\mu {
K}
\int_\tau^T 
{\int^0_{-\infty}}
e^{ {
- \mu (s- \tau ) }+ \frac{\sigma^2 (T-\tau)}{2}}
\mathcal{N}\left(
{
\frac{\log  {\frac{K}{K'}}  {-} u {+}\sigma^2 (T-s)}{\sqrt{\sigma^2 (T-s)}}
}
\right)
\frac{1}{\sqrt{2 \pi \sigma^2 (s - \tau)}}
e^{-\frac{(u -({{\mu - \sigma^2}})(s-\tau ))^2}{2 \sigma^2 (s-\tau )}}
\,du\,ds.\\
\end{split}
\end{equation}

Let us consider the fourth term in the r.h.s. of Equation \eqref{sohe_int}. This term can be written in an equivalent form as:

\begin{equation}\label{f_split}
\begin{split}
&\int_\tau^T 
E[ \int_{\mathbf{R}} 
h^{p^{\pi}}(T-s,2 \log K - X_s,y)(e^{2\log K - y} -K')^+ 1_{\{ X_s {\leq }  \log K \}} \,dy|\mathcal{F}_\tau ]\, ds\\
\\& = {-} \mu 
\int_\tau^T E[ 
e^{X_s + \frac{\sigma^2 (T-s)}{2}}\int_{\mathbf{R}} \frac{1}{\sqrt{2\pi  \sigma^2 (T-s)}}
e^{ - \frac{(y - 2\log K +X_s+ \sigma^2(T-s))^2}{2\sigma^2 (T-s)} }
1_{\{ y < {2\log K - \log K'} \}}1_{\{ X_s {\leq }  \log K \}} \,dy |\mathcal{F}_\tau ] \,ds 
\\& = {-} \mu {K}\int_\tau^T
{ \int^0_{-\infty}}
e^{\mu (s-\tau) + \frac{\sigma^2 (T-\tau)}{2}}
\mathcal{N}\left( {
\frac{\log {\frac{K}{K'}} + u +\sigma^2(T-s)}{\sqrt{\sigma^2 (T-s)}} 
}
\right)
\frac{1}{\sqrt{2 \pi \sigma^2 (s - \tau)}}
e^{-\frac{(u -(\mu+ \sigma^2) (s-\tau ))^2}{2 \sigma^2 (s-\tau )}}
\,du \,ds. 
\end{split}
\end{equation}

Finally, by substituting Equations \eqref{1_split}-\eqref{f_split} into the r.h.s. of Equation \eqref{sohe_int}, we can write the second order hedging error at $ \tau $ as

\begin{equation}
\begin{split}
   \mathrm{He}^{(2)}_{\tau}&=\\
  &  =\mu K\int_\tau^T 
{\int_{0}^{\infty}}
e^{ \mu (s - \tau ) + \frac{\sigma^2 (T-\tau )}{2}}
\mathcal{N}\left(
\frac{\log {\frac{K}{K'} +}u {+}\sigma^2 (T-s)}{\sqrt{\sigma^2 (T-s)}}\right)
\frac{1}{\sqrt{2 \pi \sigma^2 (s - \tau)}}
e^{-\frac{(u - (\mu + \sigma^2) (s-\tau ))^2}
{2 \sigma^2 (s-\tau ) } }
\, du\, ds
\\&   {+} \mu {K}\int_\tau^T 
{\int_{0}^{\infty}}
e^{- \mu(s-\tau )+ \frac{\sigma^2 (T-\tau )}{2} }
\mathcal{N}\left( {
 \frac{\log { \frac{K}{K'}}  - u + \sigma^2(T-s)}{\sqrt{\sigma^2 (T-s)}}
 }
\right) 
\frac{1}{\sqrt{2 \pi \sigma^2 (s - \tau)}}
e^{-\frac{(u - (\mu- \sigma^2) (s-\tau ))^2}{2 \sigma^2 (s-\tau ) }}
\,du \,ds
\\&  -  \mu {K}
\int_\tau^T 
{\int^0_{-\infty}}
e^{ {
- \mu (s- \tau ) }+ \frac{\sigma^2 (T-\tau)}{2}}
\mathcal{N}\left(
{
\frac{\log  { \frac{K}{K'}}  {-} u {+}\sigma^2 (T-s)}{\sqrt{\sigma^2 (T-s)}}
}
\right)
\frac{1}{\sqrt{2 \pi \sigma^2 (s - \tau)}}
e^{-\frac{(u -({{\mu - \sigma^2}})(s-\tau ))^2}{2 \sigma^2 (s-\tau )}}
\,du\,ds
\\ &   - \mu {K}\int_\tau^T
{ \int^0_{-\infty}}
e^{\mu (s-\tau) + \frac{\sigma^2 (T-\tau)}{2}}
\mathcal{N}\left( {
\frac{\log { \frac{K}{K'}} + u +\sigma^2(T-s)}{\sqrt{\sigma^2 (T-s)}} 
}
\right)
\frac{1}{\sqrt{2 \pi \sigma^2 (s - \tau)}}
e^{-\frac{(u -(\mu+ \sigma^2) (s-\tau ))^2}{2 \sigma^2 (s-\tau )}}
\,du \,ds,
\\& = 
\mu K 
e^{\frac{\sigma^2 (T-\tau)}{2}}\frac{1}{\sqrt{2 \pi \sigma^2 (s - \tau)}}
\int_\tau^T \, ds 
\\& \qquad \times 
\left\{
\int_{\mathbf{R}} \mathrm{sgn} (u)
\left(
e^{\mu (s-\tau)}\mathcal{N}\left( {
\frac{\log { \frac{K}{K'}}+\sigma^2(T-s)+u}{\sqrt{\sigma^2 (T-s)}} 
}
\right)
+
e^{-\mu (s-\tau)}\mathcal{N}\left( {
\frac{\log { \frac{K}{K'}} +\sigma^2(T-s)-u}{\sqrt{\sigma^2 (T-s)}} 
}
\right)
\right)\, du
\right\}
\end{split}
\end{equation}
and obtain the desired result.
\qed
\end{proof}
}

\end{document}